\numberwithin{equation}{section}
\newtheorem{thm}[equation]{Theorem}
\newtheorem{lem}[equation]{Lemma}
\newtheorem{cor}[equation]{Corollary}
\newtheorem{prop}[equation]{Proposition}
\crefname{thm}{Theorem}{Theorems}
\crefname{lem}{Lemma}{Lemmas}
\crefname{cor}{Corollary}{Corollaries}
\newtheorem*{thm*}{Theorem}
\newtheorem*{thmTC}{\Cref{TC_deriv}} 
\newtheorem*{thmmain}{\Cref{mainthm}}
\theoremstyle{definition}
\newtheorem{exm}[equation]{Example}
\newtheorem{defn}[equation]{Definition}
\theoremstyle{remark}
\newtheorem{rem}[equation]{Remark}
\newcommand{\vp}{\varphi}
\newcommand{\inj}{\hookrightarrow}
\newcommand{\surj}{\twoheadrightarrow}
\newcommand{\id}{\mathrm{id}}
\newcommand{\pt}{\mathrm{pt}}
\DeclarePairedDelimiter\paren{(}{)}
\DeclarePairedDelimiter\ang{\langle}{\rangle}
\DeclarePairedDelimiter\abs{\lvert}{\rvert}
\DeclarePairedDelimiter\set{\{}{\}}
	\let\oldparen\paren
	\def\paren{\@ifstar{\oldparen}{\oldparen*}}
	\let\oldbkt\bkt
	\def\bkt{\@ifstar{\oldbkt}{\oldbkt*}}
\newcommand{\bl}{-}
\newcommand{\term}{\textit}
\newcommand{\tikzpt}[1]{\draw[fill] #1 circle (0.5mm)}
\newcommand{\C}{\mathbb C}
\newcommand{\R}{\mathbb R}
\newcommand{\Z}{\mathbb Z}
\newcommand{\RP}{\mathbb{RP}}
\newcommand{\CP}{\mathbb{CP}}
\newcommand{\cA}{\mathcal A}
\newcommand{\cG}{\mathcal G}
\newcommand{\cH}{\mathcal H}
\newcommand{\cP}{\mathcal P}
\newcommand{\cV}{\mathcal V}
\newcommand{\fo}{\mathfrak o}
\newcommand{\MCG}{\mathrm{MCG}}
\renewcommand{\O}{\mathrm O}
\newcommand{\SL}{\mathrm{SL}}
\newcommand{\cat}{\mathsf}
\newcommand{\Bord}{\cat{Bord}}
\DeclareMathOperator{\Bun}{Bun}
\newcommand{\fC}{\cat C}
\newcommand{\fD}{\cat D}
\newcommand{\Gpd}{\cat{Gpd}}
\newcommand{\Vect}{\cat{Vect}}
\newcommand{\MTO}{\mathit{MTO}}
\newcommand{\Sq}{\mathrm{Sq}}
\DeclareMathOperator{\Aut}{Aut}
\DeclareMathOperator{\Diff}{Diff}
\DeclareMathOperator{\Cyl}{Cyl}
\renewcommand{\Im}{\operatorname{Im}}
\DeclareMathOperator{\End}{End}
\DeclareMathOperator{\Hol}{Hol}
\DeclareMathOperator{\Hom}{Hom}
\DeclareMathOperator{\Spec}{Spec}
\newcommand{\tr}{\mathrm{tr}}
\newcommand{\DS}{Z_{\mathrm{GDS}}}
\newcommand{\DW}{\DDW_0}
\newcommand{\DDW}{\mathrm{DW}}
\newcommand{\DWcl}{\mathrm{DW}^{\mathrm{cl}}}
\DeclareMathOperator{\St}{St}
\newcommand{\CSt}{\operatorname{\overline{\St}}}
\newcommand{\LDS}{L_{\mathrm{GDS}}}
\newcommand{\Zcl}{Z^{\mathrm{cl}}}
\newcommand{\PM}{\mathsf{PM}}
\newcommand{\Ptriv}{P_{\mathrm{triv}}}
\newcommand{\Cell}{\cat{Cell}}
\newcommand{\Lat}{\cat{Lat}}
\newcommand{\spin}{\mathrm{spin}}
\newcommand{\Ctop}{C_{\mathrm{top}}}
\newcommand{\Csm}{C_{\mathrm{sm}}}
\newcommand{\Line}{\cat{Line}}
\newcommand{\tH}{\widetilde H}
\begin{document}
\title[The low-energy TQFT of the GDS model]{The low-energy TQFT of the generalized double semion model}
\author{Arun Debray}
\address{The University of Texas at Austin}
\email{a.debray@math.utexas.edu}
\date{\today}
\subjclass[2010]{Primary: 81T27; Secondary 57R56}
\begin{abstract}
The generalized double semion (GDS) model, introduced by Freedman and Hastings, is a lattice system similar to the
toric code, with a gapped Hamiltonian whose definition depends on a triangulation of the ambient manifold $M$, but
whose space of ground states does not depend on the triangulation, but only on the underlying manifold. In this
paper, we use topological quantum field theory (TQFT) to investigate the low-energy limit of the GDS model. We
define and study a functorial TQFT $\DS$ in every dimension $n$ such that for every closed $(n-1)$-manifold $M$,
$\DS(M)$ is isomorphic to the space of ground states of the GDS model on $M$; the isomorphism can be chosen to
intertwine the actions of the mapping class group of $M$ that arise on both sides. Throughout this paper, we
compare our constructions and results with their known analogues for the toric code.
\end{abstract}

\maketitle

\tableofcontents

\section{Introduction}


The classification of topological phases of matter is an active area of research in the theory of condensed-matter
physics and in nearby mathematical fields. There are many different approaches to this classification problem (for
an incomplete sample, see~\cite{PTBO10, LG12, CGLW13, KitaevSpectrum}), but from a mathematical point of view, a
classification via low-energy limits is appealing: based on physical insights, it is believed that the low-energy
effective theory of a gapped phase of matter is a topological quantum field theory (TQFT), possibly tensored with
an invertible theory, and that passage to the low-energy effective theory should send physically distinct phases to
distinct TQFTs~\cite{FreedHopkins, Gaiotto, RowellWang, FT18}. As TQFTs have a purely mathematical description due
to Atiyah-Segal~\cite{AtiyahTQFT, SegalCFT}, this reframes the classification question within mathematics ---
though a systematic mathematical understanding of this physical ansatz relating lattice systems to effective field
theories remains out of reach. Even at a physical level of rigor, it is not clear what the general definition of
the low-energy effective theory of a lattice model should be, and without this it is impossible to rigorously
verify the efficacy of the low-energy approach to classification in general. Nonetheless, there are many examples
of lattice models in the physical and mathematical literature, and it is instructive to study what can be said
about their low-energy effective theories in order to gain insight into the general picture. Some examples
include~\cite{Kir11, BK12, Cha14, ALW17, BCKMM, CdAIJLETS17}.

In this paper, we investigate the low-energy effective theory of the generalized double semion (GDS) lattice model of
Freedman-Hastings~\cite{FreedmanHastings}, which exists in every dimension. Freedman and Hastings define the GDS
model and study its spaces of ground states on different manifolds, showing that in even (spacetime) dimensions
$n$ they are isomorphic to the state spaces of the $\Z/2$-Dijkgraaf-Witten theory with Lagrangian equal to $0$, but
that for odd $n > 3$, they are not isomorphic to the state spaces of any $\Z/2$-Dijkgraaf-Witten theory. For every
dimension $n$, we define an $n$-dimensional TQFT $\DS\colon\Bord_n\to\Vect_\C$ and show that for every closed
$(n-1)$-manifold $M$, the state space $\DS(M)$ is isomorphic to the space of ground states of the GDS model on $M$,
and that this isomorphism is equivariant with respect to the actions of $\MCG(M)$ coming from the GDS model and the
TQFT. Along the way, we reformulate the GDS model as a lattice gauge theory with gauge group $\Z/2$: it is a theory
formulated on manifolds with a triangulation, which plays the role that a Riemannian metric does in Wick-rotated
quantum field theory. We find that, as for the toric code lattice model, the low-energy limit does not depend on
the triangulation, and is described by the state spaces of a TQFT. For both the toric code and GDS models, this
TQFT is a $\Z/2$-gauge theory, but unlike for the toric code, the GDS theory involves gravity, in that
Stiefel-Whitney classes of the underlying manifold enter the effective action. This explains the above result of
Freedman-Hastings that this TQFT cannot be any $\Z/2$-Dijkgraaf-Witten theory when $n$ is odd and greater than
$3$~\cite[Theorem 8.1]{FreedmanHastings}.

The GDS model is closely analogous to the toric code; thus, throughout this paper, we will introduce ideas first
for the toric code, which is simpler, and then turn to the GDS model. In \S\ref{lattice_models}, we define the
toric code (\S\ref{tcsect}) and GDS models (\S\ref{origGDS}) in arbitrary dimension. These are both examples of
lattice models, which are discretized analogues of quantum field theories studied in condensed-matter physics: one
puts a combinatorial structure, such as a CW structure or a triangulation, on a manifold, and formulates all data
of the theory, including the fields and the Hamiltonian, in terms of this combinatorial structure. The toric code
and GDS models are typically written as spin liquids, meaning the fields are functions from the edges of a lattice
to $\set{\uparrow, \downarrow}$. We reformulate them as lattice gauge theories, describing equivalent models whose
fields are discretizations of principal $\Z/2$-bundles.

In \S\ref{coupled_to_gravity}, we construct a class of TQFTs called $\Z/2$-gauge-gravity theories. They generalize
Dijkgraaf-Witten theories with gauge group $\Z/2$, but the Lagrangian includes Stiefel-Whitney classes of
the underlying manifold in addition to characteristic classes of the principal $\Z/2$-bundle. First, in
\S\ref{classical_action}, we define ``classical gauge-gravity theories,'' invertible TQFTs of manifolds with a
principal $\Z/2$-bundle. Then, in \S\ref{constquant}, we quantize these theories, summing over the groupoid of
principal $\Z/2$-bundles to produce TQFTs $Z_\beta\colon\Bord_n\to\Vect_\C$ of unoriented manifolds given a
cohomology class $\beta\in H^n(B\O_n\times B\Z/2;\Z/2)$.

In \S\ref{low-energy_limits}, we use these gauge-gravity TQFTs to study the low-energy behavior of the GDS model.
The Hamiltonian in the GDS model has spectrum contained within $\Z_{\ge 0}$, and the space of ground states of the
GDS model on an $(n-1)$-manifold $M$ is defined to be the kernel of the Hamiltonian for $M$. In examples arising in
physics from topological phases of matter, the space of ground states often depends only on $M$, and not on the
triangulation. When this occurs, it is expected that this extends to a TQFT $Z\colon\Bord_n\to\Vect_\C$, in that
for any closed $(n-1)$-manifold $M$, $Z(M)$ is isomorphic to the space of ground states on $M$. In \S\ref{LETC}, we
implement this idea for the toric code, where we reprove the following known result.
\begin{thmTC}
If $\DW\colon\Bord_n\to\Vect_\C$ denotes the $\Z/2$-Dijkgraaf-Witten theory with Lagrangian equal to $0$, then for
every closed $(n-1)$-manifold $M$, the space of ground states of the toric code on $M$ is isomorphic to $\DW(M)$.
\end{thmTC}
In \S\ref{GDS_deriv}, we turn to the GDS model, where we prove the main theorem. Let $\alpha\in H^1(B\Z/2;\Z/2)$
denote the generator and $w\in H^*(B\O_n;\Z/2)$ denote the total Stiefel-Whitney class.
\begin{thmmain}
Let $\beta$ be the degree-$n$ piece of $w\alpha/(1+\alpha)$. Then, for every closed $(n-1)$-manifold $M$, the space
of ground states of the GDS model on $M$ is isomorphic to $Z_\beta(M)$.
\end{thmmain}
Because of this, $Z_\beta$ will also be denoted $\DS$. Then, in \S\ref{mcg}, we strengthen
\cref{TC_deriv,mainthm} slightly: with $M$ as above, we construct actions of the mapping class group of $M$ on the
spaces of low-energy states of the toric code and GDS models on $M$, and show the isomorphisms of these spaces with
$\DW(M)$, resp.\ $\DS(M)$, are equivariant with respect to these actions.

In \S\ref{calcs}, we provide some calculations with this low-energy TQFT, allowing us to prove a comparison theorem
with $\Z/2$-Dijkgraaf-Witten theories.
\begin{thm*}\hfill
\begin{enumerate}
	\item In dimension $3$, there is an isomorphism between $\DS$ and the $\Z/2$-Dijkgraaf-Witten theory with
	Lagrangian equal to the nonzero element of $H^3(B\Z/2;\Z/2)$.
	\item\label{evenpart} In any even dimension, there is an isomorphism between $\DS$ and $\DW$.
	\item\label{oddcase} For odd $n\ge 5$, $\DS$ is distinct from all $\Z/2$-Dijkgraaf-Witten theories.
\end{enumerate}
\end{thm*}
This theorem is a combination of \cref{dim3isom,isDW,notDW}. Part~\eqref{oddcase} was first proven
by~\cite{FreedmanHastings}, as was~\eqref{evenpart} for state spaces.

\subsection*{Acknowledgements}
I gratefully thank my advisor, Dan Freed, for his constant help and guidance.

\section{The toric code and GDS models}
	\label{lattice_models}
		\begin{defn}
Let $X$ be a topological space with a CW structure $\Xi$. We let $\Delta^k(X)$ denote its set of $k$-cells and
$X^k$ denote its $k$-skeleton. When we need to make explicit that these are with respect to $\Xi$, we will write
$\Delta^k(X;\Xi)$, resp.\ $X_\Xi^k$. If $\Pi$ is a triangulation of $X$, we will also write $\Delta^k(X;\Pi)$ and
$X_\Pi^k$ for the $k$-simplices, resp.\ $k$-skeleton, of $X$ with respect to $\Pi$.

When we need $\Xi$ to be explicit, we will write $C_k^\Xi(X;A)$ (resp.\ $C^k_\Xi(X;A)$) for the group of cellular
$k$-chains (resp.\ $k$-cochains) with coefficients in an abelian group $A$ for the CW structure $\Xi$. We will
employ analogous notation for cycles and cocycles, and for simplicial (co)chains and (co)cycles with respect to a
given triangulation $\Pi$.
\end{defn}
\begin{defn}
\label{spindefn}
For a topological space $X$, let $\Bun_{\Z/2}(X)$ denote the groupoid of principal $\Z/2$-bundles on $X$, and if
$Y\subset X$, let $\Bun_{\Z/2}(X, Y)$ denote the groupoid of principal $\Z/2$-bundles $P\to X$ equipped with a
trivialization $\xi$ over $Y$.

If $X$ is a CW complex, then $(P,\xi)\in\Bun_{\Z/2}(X^1, X^0)$ determines a function $\spin_{(P,\xi)}\colon
\Delta^1(X)\to\Z/2$: if $e$ is a $1$-cell of $X$, $P|_e$ descends to a principal bundle $P'\to e/\partial e$, where
we use the trivialization of $P$ on $\partial e$ to identify the fibers. Then $\spin_{(P,\xi)}(e)$ is $0$ if $P'$
is trivial, and $1$ if it is nontrivial.
\end{defn}
In other words, if $\partial e = \set{v,w}$, we can compare $\xi(v)$ and $\xi(w)$ by parallel-transporting along
$e$; then $\spin_{(P,\xi)}(e)$ is their difference. The function $\spin_{(P,\xi)}$ determines $(P,\xi)$ up to
isomorphism.

\subsection{The toric code}
\label{tcsect}
The toric code was originally studied by Kitaev~\cite{Kitaev}. He was interested in its properties as a quantum
error-correcting code when put on a torus, hence the name ``toric code;'' a more descriptive name would be
``lattice gauge theory for a finite group $G$.'' Subsequently, it has been generalized in many directions: defining
it on nonorientable surfaces~\cite{FL01}; generalizing it to manifolds of any dimension~\cite{FML}; placing the
spins on $k$-cells, rather than edges~\cite{DKLP}; considering a fermionic variant~\cite{fermionicTC}; changing
whether it is even a gauge theory at all~\cite{BMCA}; and adding global symmetries~\cite{BBJCW16, HBFL16, LV16}. In
this paper, we will not consider most of these generalizations.

Fix a dimension $n$, which will always be the \term{spacetime dimension}; that is, lattice models are on
$(n-1)$-manifolds, and TQFTs are formulated with $n$-dimensional cobordisms between $(n-1)$-dimensional manifolds.
The toric code assigns to a closed $(n-1)$-manifold $M$ together with a CW structure a finite-dimensional complex
vector space $\cH$, called the \term{state space}, and a self-adjoint operator $H\colon\cH\to\cH$, called the
\term{Hamiltonian}. We proceed to define these.

The groupoid of fields for the toric code is $\Bun_{\Z/2}(M^1, M^0)$, and the state space assigned to $M$ is
$\cH\coloneqq \C[\Bun_{\Z/2}(M^1, M^0)]$, the vector space of complex-valued functions on the groupoid of fields.
Given $(P,\xi)\in\pi_0\Bun_{\Z/2}(M^1, M^0)$, let $\delta_{(P,\xi)}\in\cH$ be the function sending
$(P,\xi)\mapsto 1$ and all nonisomorphic $(P',\xi')$ to $0$. The set
\begin{equation}
\label{orthbasis}
	\set{\delta_{(P,\xi)}\mid (P,\xi)\in\pi_0\Bun_{\Z/2}(M^1, M^0)}
\end{equation}
is a basis for $\cH$; endow $\cH$ with the inner product for which it is an orthonormal basis.

Given a $0$-cell $v$ of $M$, let $A_v\colon\cH\to\cH$ denote the \term{shift operator} at $v$: if $\psi\in\cH$ and
$(P,\xi)\in\Bun_{\Z/2}(M^1, M^0)$, let $\xi+\delta_v$ denote the section of $P$ on $M^0$ which is identical to
$\xi$ except on $v$, where its value is $\xi(v)+1$. Then,
\begin{subequations}
\label{A_vB_f}
\begin{equation}
\label{A_vB_fA}
	A_v(\psi)(P,\xi) \coloneqq \psi(P, \xi+\delta_v).
\end{equation}
Given a $2$-cell $f$ of $M$, let $B_f\colon\cH\to\cH$ be multiplication by the holonomy around $\partial f$:
\begin{equation}
\label{A_vB_fB}
	B_f(\psi)(P,\xi) \coloneqq (-1)^{\Hol_P(f)}\psi.
\end{equation}
\end{subequations}
There are operators associated to each $2$-cell $f$ and each $0$-cell $v$, called \term{face operators}, resp.\
\term{vertex operators}:
\begin{subequations}
\label{geom_TC}
\begin{align}
\label{geomTCface}
H_f &\coloneqq \frac{1 - B_f}{2}\\
H_v &\coloneqq \frac{1 - A_v}{2},
\label{geomTCvertex}
\end{align}
\end{subequations}
and the Hamiltonian assigned to $M$ is
\begin{equation}
\label{geomTCham}
H_{\mathrm{TC}} \coloneqq \sum_{v\in\Delta^{0}(M)} H_v + \sum_{f\in\Delta^{2}(M)} H_f.
\end{equation}
\begin{rem}
The original definition of the toric code looked different, replacing $(P,\xi)$ with the function
$\spin_{(P,\xi)}\colon\Delta^1(M)\to\Z/2$ it defines. The state space is the free complex vector space on the
finite set of these functions. The analogues of $A_v$ and $B_f$ for $v\in\Delta^{0}(M)$ and $f\in\Delta^{2}(M)$ are
\begin{subequations}
\label{origops}
\begin{align}
	A_v' &\coloneqq \prod_{e: v\in\partial e} \sigma_e^x\\
	B_f' &\coloneqq \prod_{e\in\partial f} \sigma_e^z.
\end{align}
\end{subequations}
Here, $\sigma^x$ and $\sigma^z$ are the Pauli operators
\begin{equation}
\label{paulidesc}
	\sigma^x = \begin{pmatrix}0 & 1\\1 & 0\end{pmatrix},\qquad\qquad\qquad\sigma^z = \begin{pmatrix*}[r]1 & 0\\0 &
	-1\end{pmatrix*}.
\end{equation}
The state space $\cH$ can be identified with the tensor product of local state spaces
$\cH_e\coloneqq\C\cdot\set{0,1}$ over each $1$-cell $e$, and the notation $\sigma_e^x$ and $\sigma_e^z$ means these
operators act on $\cH_e$ by the matrices in~\eqref{paulidesc}, and by the identity on the remaining tensor factors.

We can identify $A_v'$ with $A_v$ by observing that switching the trivialization for $(P,\xi)$ over $v$ amounts to
switching the value of $\spin_{(P,\xi)}$ on any 1-cell $e$ adjacent to $v$, which is the action by $\sigma_e^x$. To
identify $B_f$ and $B_f'$, observe that the holonomy of $(P,\xi)$ around $\partial f$ is the product of the spins
on the $1$-cells in $\partial f$.
\end{rem}
\begin{prop}\hfill
\label{TC_lin_alg}
\begin{enumerate}
	\item\label{TC_self_adj} The Hamiltonian $H_{\mathrm{TC}}$ is self-adjoint.
	\item\label{TC_commute} The $H_f$ and $H_v$ operators are projectors, and pairwise commute.
	\item\label{gappedTC} $\Spec(H_{\mathrm{TC}})\subset \Z_{\ge 0}$, and $0$ is always an eigenvalue.
\end{enumerate}
\end{prop}
\begin{proof}[Proof sketch]
Using the identifications of $A_v$ with $A_v'$ and $B_f$ with $B_f'$, $A_v$ and $B_f$ are products of real
symmetric matrices, hence are themselves real symmetric matrices; therefore $H_v$ and $H_f$ are too. Therefore $H$
is a sum of real symmetric matrices, proving part~\eqref{TC_self_adj}.

Part~\eqref{TC_commute} is directly analogous to Kitaev's original proof in dimension $n-1 = 2$~\cite{Kitaev};
see~\cite{FML} for the generalization to higher dimensions.

Part~\eqref{gappedTC} follows because the eigenvalues of $A_f$ and $B_v$ are in $\set{\pm 1}$, so the eigenvalues
of $H_f$ and $H_v$ are in $\set{0,1}$. The trivial bundle, together with the identity trivialization, is an
eigenvector for $0$.
\end{proof}
\subsection{Generalized double semion model}
\label{origGDS}
Our main focus is the generalized double semion (GDS) model.

The double semion model for $n = 3$ was first studied by Freedman-Nayak-Shtengel-Walker-Wang~\cite{FNSWW} and
Levin-Wen~\cite[\S VI.A]{LevinWen}, then generalized to all dimensions $n$ by Freedman and
Hastings~\cite{FreedmanHastings}.\footnote{There are a few other generalizations of the double semion model in low
dimensions~\cite{vKBS, LV16, OMD, DOVM18}, but we focus on Freedman-Hastings' construction.} The name comes from
the excitations in the $n = 3$ case, which produce pairs of semions, anyonic quasiparticles with statistics
intermediate between those of bosons and fermions.\footnote{The name ``generalized double semions'' is somewhat of
a misnomer, however: anyons cannot exist in dimension $n > 3$, because the braids that define their mutual
statistics can be unlinked. See~\cite[\S2.1]{RowellWang}. It is also not clear that the theory is the double of
another~\cite[\S1]{FreedmanHastings}. At least it is generalized.}
%
%
\begin{defn}
Let $M$ be a simplicial complex and $c$ be a simplex of $M$.
\begin{itemize}
	\item The \term{open star} of $c$, denoted $\St(c)$, is the subset of $M$ consisting of all simplices whose
	closures contain $c$.
	\item The \term{closed star} of $c$, denoted $\CSt(c)$, is the smallest subcomplex containing $\St(c)$).
\end{itemize}
\end{defn}
For the GDS model, we need a neighborhood of $v$ in between the open and closed stars of $v$.
\begin{defn}
Let $M$ be a simplicial complex and $e$ be a simplex of $M$. Define the \term{$0$-clopen star} $\CSt(0)(e)$ to be
$\St(e)\cup\CSt(e)^0$. That is, we include the $0$-simplices of the closed star of $e$ as well as all cells in the
open star.
\end{defn}
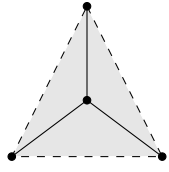
\begin{figure}[h!]
\begin{tikzpicture}
\coordinate (r0) at (0, -0.25);
\coordinate (s0) at (0, 1);
\coordinate (s1) at (1, -1);
\coordinate (s2) at (-1, -1);

\filldraw[fill=gray!20, draw=white] (s0) -- (s1) -- (s2) -- cycle;
\draw (r0) -- (s0);
\draw (r0) -- (s1);
\draw (r0) -- (s2);

\draw[dashed] (s0) -- (s1) -- (s2) -- cycle;

\tikzpt{(r0)};
\tikzpt{(s0)};
\tikzpt{(s1)};
\tikzpt{(s2)};
\end{tikzpicture}
\caption{The $0$-clopen star of a vertex in a simplicial structure on a surface.}
\end{figure}

As before, fix a dimension $n$; we proceed to define the state space and Hamiltonian that the GDS model assigns.
In order to avoid pathologies, one cannot define the GDS model for an arbitrary CW structure.
\begin{defn}
A \term{triangulation} of a smooth manifold $M$ is a simplicial complex $K$ together with a homeomorphism $f\colon
\abs K\to M$; if for every simplex $e$ of $K$, the restriction of $f$ to $\abs e$ is smooth, we say $(K, f)$ is a
\term{smooth triangulation}.
\end{defn}
When defining the GDS model, we choose a smooth triangulation $\Pi$ such that the $0$-clopen star of every vertex
is contractible.\footnote{The second constraint can always be satisfied after a refinement.} We discuss in
\cref{noCW} why restricting to triangulations is necessary.

The GDS model assigns to every closed $(n-1)$-manifold $M$ with such a triangulation a state space and Hamiltonian,
like the toric code does; the state space is $\C[\Bun_{\Z/2}(M^1, M^0)]$ as for the toric code, and we proceed to
define the Hamiltonian, which is similar to that of the toric code, but with an extra sign.
\begin{defn}
Let $M$ be a closed $(n-1)$-manifold with a smooth triangulation such that the $0$-clopen star of every vertex is
contractible. Then, given $(P,\xi)\in\Bun_{\Z/2}(M^1, M^0)$ and a $0$-simplex $v$, there is a unique maximal
extension of $\xi$ to a subset of $\CSt(0)(v)$; we denote that subset $Y_v'(P,\xi)$.
\end{defn}
\begin{defn}
Let $v\in\Delta^0(M;\Pi)$ and $S(v)$ denote the link of $v$ in the barycentric subdivision $\Pi_1$ of $\Pi$. Though
$S(v)$ comes equipped with a triangulation $\Pi_1|_{S(v)}$, we define a new triangulation $\Pi_{S(v)}$ on $S(v)$.
For $k\ge 0$, if $e$ is a $(k+1)$-simplex of $\Pi$ such that $v\in\partial e$, let
\begin{equation}
	C(e)\coloneqq \set{c\in\Delta^*(S(v), \Pi_1|_{S(v)}): \abs c\subset\abs e}.
\end{equation}
For each such $e$, we define a $k$-simplex of $\Pi_{S(v)}$, denoted $S(v)\cap e$, whose geometric realization is
\begin{equation}
	\abs{S(v)\cap e} \coloneqq \bigcup_{c\in C(e)} c.
\end{equation}
We say that $S(v)\cap e'$ is a face of $S(v)\cap e$ if every $c'\in C(e')$ is a face of some $c\in C(e)$, which may
depend on $c'$. This data defines a triangulation on $S(v)$ such that if $e$ is a simplex of $\Pi$ with
$v\in\partial e$,
\begin{equation}
	\abs{S(v)\cap e} = \abs{S(v)}\cap\abs e.
\end{equation}
\end{defn}
From now on, the triangulation on $S(v)$ is assumed to be $\Pi_{S(v)}$ unless stated otherwise.
\begin{defn}
\label{myGDSsignchange}
Let $(P,\xi)\in\Bun_{\Z/2}(M^1, M^0)$. For any $v\in\Delta^{0}(M)$, let
\begin{equation}
	Y_v(P,\xi)\coloneqq \set{S(v)\cap e\mid e\in Y_v'(P,\xi)},
\end{equation}
which is a subcomplex of $S(v)$. The \term{GDS sign}~\cite[\S4]{FreedmanHastings} is
\begin{equation}
\label{GDS_sign}
	\sigma(v, (P,\xi))\coloneqq (-1)^{1 + \chi(\abs{Y_v(P,\xi)})}.
\end{equation}
Here $\chi$ denotes the Euler characteristic.
\end{defn}
Let $U_v$ denote the operator on $\cH$ defined by $U_v(\psi)(P,\xi) \coloneqq \sigma(v, (P,\xi))A_v(\psi)$, where
$A_v$ is as in~\eqref{A_vB_fA}. The Hamiltonian for the GDS model is
\begin{equation}
\label{eqham}
	H_{\mathrm{GDS}}\coloneqq \sum_{v\in\Delta^{0}(M)} \widetilde H_v + \sum_{f\in\Delta^{2}(M)} H_f,
\end{equation}
where $H_f$ is as in~\eqref{geomTCface} and
\begin{equation}
	 \widetilde H_v = \frac{1 - U_v}{2}.
\end{equation}
As for the toric code, we call $\widetilde H_v$ a \term{vertex operator} and $H_f$ a \term{face operator}.
\begin{rem}
\label{dualremark}
In our analysis of the GDS model, we will need to make use of the \term{dual cell complex} $\Pi^\vee$ to the
specified triangulation $\Pi$, a CW complex on $M$ with several nice properties.
\begin{itemize}
	\item $\Pi^\vee$ comes with data of a bijection $(\cdot)^\vee\colon\Delta^k(M, \Pi)\to\Delta^{n-1-k}(M,
	\Pi^\vee)$, sending a simplex to its \term{dual cell}, and such that if $e\in\partial f$, then
	$f^\vee\in\partial e^\vee$, and conversely.
	\item The map $(\cdot)^\vee$ induces a chain map on the cellular chain complexes of $\Pi$ and $\Pi^\vee$ which
	induces Poincaré duality for the cohomology of $M$ with $\Z/2$ coefficients.
	\item Each cell in $\Pi^\vee$ is a union of cells of the barycentric subdivision $\Pi_1$ of $\Pi$. (One might
	think of $\Pi_1$ as a refinement of $\Pi^\vee$; though this is not strictly true, as $\Pi^\vee$ might not come
	from a triangulation, it is a useful piece of intuition.) In particular, $\Pi^\vee$ is a regular CW complex,
	meaning the closure of each cell is contractible.
\end{itemize}
This complex is unique up to equivalence of CW complexes. Proofs of these facts follow from the results
in~\cite[\S1.6]{PLbook}.

We will also denote $((\cdot)^\vee)^{-1}$ by $(\cdot)^\vee$, but since we do not confuse $\Pi$ and $\Pi^\vee$, the
meaning will be clear from context. If $S$ is a set of cells, we write $S^\vee\coloneqq\set{e^\vee\mid e\in S}$.
\end{rem}
\begin{rem}
\label{dualFH}
Freedman-Hastings~\cite{FreedmanHastings} study a dual version of the GDS model, in that our model for $M$ and
$\Pi$ corresponds to their model for $M$ and $\Pi^\vee$. Here we compare the two setups.

Let $(P,\xi)\in\Bun_{\Z/2}(M^1,M^0)$, which defines a function $\spin_{(P,\xi)}\colon\Delta^1(M,\Pi)\to\Z/2$ as in
\cref{spindefn}; we also let $\spin_{(P,\xi)}$ denote the function $\Delta^{n-2}(M, \Pi^\vee)\to\Z/2$ defined by
precomposing with $(\cdot)^\vee$.

For any $v\in\Delta^0(M,\Pi)$, let
\begin{equation}
T(v,(P,\xi))\coloneqq \spin_{(P,\xi)}^{-1}(0)\cap \partial v^\vee,
\end{equation}
which is a closed union of cells of $\Pi^\vee$.

The GDS sign as defined by Freedman-Hastings~\cite[\S4]{FreedmanHastings} is
\begin{equation}
	\sigma'(v, (P,\xi)) \coloneqq (-1)^{1+\chi(T(v,(P,\xi))}.
\end{equation}
Let $e\in\CSt(0)(v)$. Unwinding the definitions, $e\cap S(v)\in Y_v(P,\xi)$ if and only if $e^\vee$ is a cell of
$T(v,(P,\xi))$, so the number of simplices in $Y_v(P,\xi)$ equals the number of cells in $T(v,(P,\xi))$. Since both
$T(v, (P,\xi))$ and $Y_v(P,\xi)$ are closed subsets of $M$ that are unions of cells, their Euler characteristics
are equal, so $\sigma =\sigma'$. This means there is an isomorphism between the state spaces of the model we define
above and the model as defined by Freedman-Hastings, and this isomorphism intertwines their Hamiltonians, so on any
closed $(n-1)$-manifold, the spaces of ground states of these two models are isomorphic.
\end{rem}
Next, we prove analogues of \cref{TC_lin_alg} for the GDS model. In view of \cref{dualFH}, these also follow from
results of Freedman-Hastings~\cite[Lemmas 4.1, 4.2]{FreedmanHastings}, but are proven in a different way.
\begin{lem}
\label{descent_lin_alg}
The Hamiltonian $H_{\mathrm{GDS}}$ is self-adjoint, and $\Spec(H_{\mathrm{GDS}})\subset\Z_{\ge 0}$.
\end{lem}
\begin{proof}
The first part is true because the Hamiltonian is a sum of real symmetric matrices in a basis of
$\delta$-functions, just as in the proof of \cref{TC_lin_alg}. For the second part, since the eigenvalues of $A_v$
and $B_f$ lie in $\set{\pm 1}$ and $\sigma$ is valued in $\set{\pm 1}$, then the eigenvalues of $H_f$ and $\tH_v$
lie in $\set{0,1}$.
\end{proof}
Unlike for the toric code, it is not true that $0$ is always an eigenvalue. \Cref{mainthm,simpconn} together imply
this happens for $M = \CP^{2k}$.

\begin{lem}
\label{commuting_descent}
All face operators commute, and all face operators commute with all vertex operators. After restricting to the
intersection of the kernels of the face operators, $[U_{v_1}, U_{v_2}] = 0$ and hence all vertex operators commute
when restricted to that intersection.
\end{lem}
\begin{proof}
The face operators are the same as in the toric code, hence commute by \cref{TC_lin_alg}. Operators corresponding
to simplices not in each others' closed stars commute. Therefore we have two things left to prove:
\begin{enumerate}
	\item\label{vfcomm} Given a $2$-simplex $f$ and a $0$-simplex $v\in\partial f$, $[H_f, \tH_v] = 0$.
	\item\label{vvcomm} Given a $1$-simplex $e$ and two $0$-simplices $v_1,v_2\in\partial e$, $[U_{v_1}, U_{v_2}] =
	0$ when restricted to $\bigcap_{f\in\Delta^2(M)} H_f$.
\end{enumerate}
For part~\eqref{vfcomm}: since the GDS sign factors out of $[B_f, U_v]$, then $[B_f, U_v] = \pm[B_f, A_v] = 0$ by
\cref{TC_lin_alg}, and therefore $[H_f, \tH_v] = 0$.

For part~\eqref{vvcomm}, choose $\psi\in\cH$ such that $H_f\psi = 0$ for all $2$-simplices $f$, and choose
$(P,\xi)\in\Bun_{\Z/2}(M^1, M^0)$. Since $B_f$ acts by multiplication by the holonomy of $P$ around $\partial f$,
then $\psi(P,\xi) = 0$ unless $\Hol_P(f) = 0$ for all $f$; equivalently, $P$ must extend to all of $M$.\footnote{We
will return to this point in \S\ref{LETC}.} (This extension is necessarily unique up to isomorphism.) If this is
the case,
\begin{equation}
\begin{aligned}
	[U_{v_1}, U_{v_2}]\psi(P,\xi) =\phantom{-}\sigma(v_2, (P, \xi+\delta_{v_1}))\sigma(v_1,
	(P,\xi)) \psi(P,\xi+\delta_{v_1} + \delta_{v_2}) &\\
	\phantom{a}-\sigma(v_1, (P, \xi+\delta_{v_2}))\sigma(v_2, (P,\xi))\psi(P,\xi+\delta_{v_1} + \delta_{v_2})&,
	\end{aligned}
\end{equation}
so it suffices to show that if $(P,\xi)\in\Bun_{\Z/2}(M, M^0)$,
\begin{equation}
\label{sufficesAB}
\sigma(v_2, (P, \xi+\delta_{v_1}))\sigma(v_1, (P,\xi)) =\sigma(v_1, (P, \xi+\delta_{v_2}))\sigma(v_2, (P,\xi)).
\end{equation}
Tracing through the definition of the GDS sign, this is equivalent to
\begin{equation}
\label{notCWgen}
	\chi(\abs{Y_{v_2}(P, \xi+\delta_{v_1})}) + \chi(\abs{Y_{v_1}(P, \xi)}) \underset{\bmod
	2}{\equiv} \chi(\abs{Y_{v_1}(P, \xi+\delta_{v_2})}) + \chi(\abs{Y_{v_2}(P, \xi)}).
\end{equation}
Suppose $\spin_{(P,\xi)}(e) = 0$. For $i = 1,2$, let $A(v_i)$ denote the set of simplices in $Y_{v_i}(P, \xi)$
contained in the closure of a simplex whose closure also contains $e$. Let $B(v_i)\coloneqq Y_{v_i}(P,
\xi)\setminus A(v_i)$. Then
\begin{subequations}
\label{decompintoAB}
\begin{align}
	\label{decompAB1}
	\chi(\abs{Y_{v_2}(P, \xi+\delta_{v_1})}) + \chi(\abs{Y_{v_1}(P, \xi)}) &= \#\paren{A(v_1) \amalg B(v_1)
	\amalg B(v_2)}\\
	\chi(\abs{Y_{v_1}(P, \xi+\delta_{v_2})}) + \chi(\abs{Y_{v_2}(P, \xi)}) &= \#\paren{A(v_2) \amalg B(v_2)
	\amalg B(v_1)}.
	\label{decompAB2}
\end{align}
\end{subequations}
It therefore suffices to prove that $\#{A(v_1)} = \#{A(v_2)}$. Let $c_1$ be a $1$-simplex in $A(v_1)$. Since
$2$-simplices are triangles, there exists a unique $1$-simplex $c_2$ whose closure contains $v_2$ and such that
there is a $2$-simplex $f$ with $\partial f = c_1 + c_2 + e$. By assumption, $\spin_{(P,\xi)}(e) =
\spin_{(P,\xi)}(c_1) = 0$, and since the holonomy of $P$ around $\partial f$ vanishes, $\spin_{(P,\xi)}(c_2) = 0$
too. Similarly, suppose $c_1'$ and $c_2'$ are $1$-simplices such that $v_1$ is a face of $c_1'$, $v_2$ is a face of
$c_2'$, $\spin_{(P,\xi)}(c_1') = 1$, and there is a 2-simplex $f'$ with $\partial f' = c_1' + c_2' + e$; then
$\spin_{(P,\xi)}(c_2') = 1$ too. This argument is obviously symmetric in $v_1$ and $v_2$.

%
The case $\spin_{(P,\xi)}(e) = 1$ is analogous.
\end{proof}
\begin{rem}
\label{noCW}
The ideas that go into the GDS model still make sense when one generalizes to smooth manifolds with regular CW
structures, rather than smooth triangulations, but \cref{commuting_descent} does not generalize. See \cref{noCWfig}
for a counterexample.

\begin{figure}[h!]
\begin{tikzpicture}
\draw (-2, 0) -- (2, 0) -- (2, 2) -- (-2, 2) -- cycle;
\draw (0, -1) -- (0, 3);

\node[above left] at (0, 2) {$v_1$};
\node[below left] at (0, 0) {$v_2$};

\node[left] at (-2, 1) {$0$};
\node[right] at (2, 1) {$1$};
\node[below] at (-1, 0) {$0$};
\node[below] at (1, 0) {$1$};
\node[above] at (-1, 2) {$0$};
\node[above] at (1, 2) {$0$};
\node[right] at (0, 1) {$0$};
\node[left] at (0, -1) {$1$};
\node[left] at (0, 3) {$0$};

\draw (0,0) circle (6mm);
\draw (0,0) circle (7.5mm);
\draw (0,2) circle (6mm);
\draw[blue, line width=0.5mm, domain=90:180] plot ({0.6*cos(\x)}, {0.6*sin(\x)});
\fill[blue] (0, 0.6) circle (0.7mm);
\fill[blue] (-0.6, 0) circle (0.7mm);
\draw[blue, line width=0.5mm, domain=0:180] plot ({0.6*cos(\x)}, {2+0.6*sin(\x)});
\fill[blue] (0.6, 2) circle (0.7mm);
\fill[blue] (-0.6, 2) circle (0.7mm);

\fill (0, 0) circle (0.7mm);
\fill (0, 2) circle (0.7mm);
\fill (2, 0) circle (0.7mm);
\fill (2, 2) circle (0.7mm);
\fill (-2, 0) circle (0.7mm);
\fill (-2, 2) circle (0.7mm);

\draw[red, line width=0.5mm] (0,2) circle (7.5mm);
\fill[red] (-0.75, 0) circle (0.7mm);
\end{tikzpicture}
\caption{\Cref{commuting_descent} does not generalize from triangulations to CW structures. The straight lines in
this figure depict a neighborhood on a smooth surface $\Sigma$ with a CW structure. Choose
$(P,\xi)\in\Bun_{\Z/2}(\Sigma^1, \Sigma^0)$ such that the number on each pictured $1$-cell $e$ is
$\spin_{(P,\xi)}(e)$. The circles around the $0$-cells $v_1$ and $v_2$ represent two copies each of the links
$S(v_1)$ and $S(v_2)$. The red region (shaded portions of the outer circles) is $\abs{Y_{v_1}(P,\xi)} \amalg
\abs{Y_{v_2}(P,\xi+\delta_{v_1})}$, and the blue region (shaded portions of the inner circles) is
$\abs{Y_{v_2}(P,\xi)} \amalg \abs{Y_{v_1}(P,\xi+\delta_{v_2})}$. By inspection, the Euler characteristics of these
two regions are not equal mod $2$, so~\eqref{notCWgen} does not hold in this setting, and therefore
\cref{commuting_descent} also does not apply to this CW structure: $\tH_{v_1}$ and $\tH_{v_2}$ do not commute even
when restricted to $\bigcap_f H_f$.}
\label{noCWfig}
\end{figure}
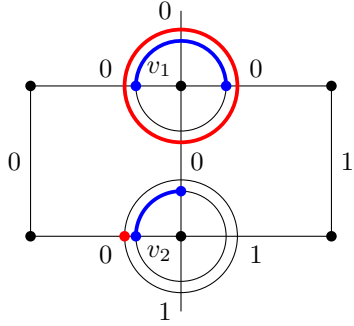

If one lets $n = 3$ and passes to the dual CW structure as in \cref{dualremark}, this recovers a fact known to
condensed-matter theorists: the double semion model on a surface can be formulated on a hexagonal lattice (or more
generally a trivalent lattice), but has an ambiguity when placed on a square lattice~\cite[\S2]{FreedmanHastings}.
This is because the dual CW structure to a trivalent lattice has triangular $2$-cells, but the dual of a
tetravalent lattice does not. For general $n$, this obstruction is encoded in the genericity assumption placed on
the CW structure in Freedman-Hastings' construction~\cite[\S4]{FreedmanHastings}; in our model this corresponds to
the restriction to smooth triangulations.
\end{rem}
\begin{lem}
\label{DS_proj}
The face operators are projectors. The operator $U_v$ has order $2$, and hence $\tH_v$ is a projector.
\end{lem}
\begin{proof}
The face operators are the same as in the toric code, hence are projectors by \cref{TC_lin_alg}. For $U_v$, choose
a $0$-simplex $v$, $\psi\in\cH$, and $(P,\xi)\in\Bun_{\Z/2}(M^1, M^0)$; then,
\begin{equation}
\label{Uv2}
	U_v^2\psi(P,\xi) = \sigma(v, (P,\xi+\delta_v)\sigma(v, (P,\xi)) \psi(P,\xi) = (-1)^{\chi(\abs{Y_v(P, \xi +
	\delta_v)}) + \chi(\abs{Y_v(P,\xi)})} \psi(P,\xi).
\end{equation}
Unwinding the definition of $Y_v$, and using that $\chi(S(v)) \equiv 0\bmod 2$, $\chi(\abs{Y_v(P,\xi+\delta_v)})
+ \chi(\abs{Y_v(P,\xi)})$ is equal mod 2 to the number of simplices $e$ in $S(v)$ of dimension at least $1$ such that
$\overline e$ contains a $1$-simplex on which $\xi$ extends and a $1$-simplex on which $\xi+\delta_v$ extends
(equivalently, on which $\xi$ does not extend). Let $Q$ be the set of such $e$.

Endow $S(v)$ with the Poincaré dual CW structure $\Pi_{S(v)}^\vee$ to the triangulation $\Pi_{S(v)}$, as in
\cref{dualremark}. Let $R\subset\Pi_{S(v)}$ be the set of $1$-simplices on which $\xi$ extends; then,
$\abs{R^\vee}$ is a topological submanifold (with boundary) of $S(v)$, and $\partial\abs{R^\vee} = \abs{ Q^\vee}$.
Hence $\chi(\abs{Q^\vee}) \equiv 0\bmod 2$; since $Q^\vee$ is a subcomplex of $\Pi_{S(v)}^\vee$, this means
$Q^\vee$ has an even number of cells, so $Q$ has an even number of simplices. Thus $\chi(\abs{Y_v(P,\xi+\delta_v)})
+ \chi(\abs{Y_v(P,\xi)})\equiv 0\bmod 2$, and this suffices by~\eqref{Uv2}.
\end{proof}

There are a few other equivalent ways to define the GDS sign. We record one which we will use later.
\begin{prop}
\label{alternatesign}
Let $(P,\xi)\in\Bun_{\Z/2}(M^1, M^0)$ and $v\in\Delta^0(M)$, and let $N_v$ be the set of simplices $c$ of $M$ with
$v\in\partial c$. If $Z_v(P,\xi)\subset N_v$ denotes the subset of simplices $c$ such that either
\begin{enumerate*}
	\item $c$ is a $1$-simplex and $\spin_{(P,\xi)}(c) = 1$, or
	\item there is a $1$-simplex $e\in\partial c$ with $\spin_{(P,\xi)}(e) = 1$,
\end{enumerate*}
then $(-1)^{1+\#{Z_v(P,\xi)}} = \sigma(v, (P,\xi))$.
\end{prop}
\begin{proof}
It suffices to show $\#{Z_v(P,\xi)} \equiv \#{Y_v(P,\xi)}\bmod 2$. If $W_v(P,\xi)$ denotes the subset of $N_v$
consisting of simplices $c$ such that either
\begin{enumerate*}
	\item $c$ is a $1$-simplex and $\spin_{(P,\xi)}(c) = 0$, or
	\item $\spin_{(P,\xi)}(e) = 0$ for all $e\in\Delta^1(\partial c)$,
\end{enumerate*}
then the map $c\mapsto c\cap S(v)$ for $c\in N_v$ restricts to a bijection from $W_v(P,\xi)$ to $Y_v(P,\xi)$.

By definition, $Z_v(P,\xi)$ is the complement of $W_v(P,\xi)$ inside $N_v$. Since $N_v^\vee = \partial v^\vee$ and
$\chi(\abs{\partial v^\vee})$ is even, then $\#{N_v}$ is even and
\begin{equation}
	\#{Z_v(P,\xi)} + \#{Y_v(P,\xi)} = \#{Z_v(P,\xi)} + \#{W_v(P,\xi)} = \#{N_v}\equiv 0\bmod 2.
	\qedhere
\end{equation}
\end{proof}

\section{Gauge-gravity TQFTs}

%

\label{coupled_to_gravity}
As part of our goal of studying the low-energy behavior of the GDS model, we would like a description in terms of a
TQFT whose state spaces we can compute relatively easily. The answer comes to us as one of a class of TQFTs, called
$\Z/2$-gauge-gravity theories; these TQFTs are slight generalizations of Dijkgraaf-Witten
theories~\cite{DijkgraafWitten, FreedQuinn}, in which Stiefel-Whitney classes of the underlying manifold can enter
the Lagrangian action. Theories of this sort have also been considered by Kapustin~\cite{KapustinGravity2,
KapustinGravity}, Wen~\cite{WenGravity,WenGravity2}, and Lan-Kong-Wen~\cite{LKWGravity}, though not in this
generality.

As in the construction of Dijkgraaf-Witten theories, we will construct the gauge-gravity theories in two steps:
defining the classical (invertible) theory for unoriented manifolds with a principal $\Z/2$-bundle, then summing
over principal $\Z/2$-bundles to define the quantum theory.

\subsection{Construction of the classical $\Z/2$-gauge-gravity theories}
\label{classical_action}
Let $\Bord_n$ denote the unoriented bordism category in dimension $n$, whose objects are closed $(n-1)$-manifolds
and whose morphisms are diffeomorphism classes of bordisms between them, and, for a topological space $X$, let $\Bord_n(X)$ denote the
bordism category of manifolds together with a map to $X$.
\begin{defn}
A TQFT $Z\colon\Bord_n(X)\to\Vect_\C$ is \term{invertible} if it factors through the subgroupoid
$\Line_\C\inj\Vect_\C$ of complex lines and nonzero homomorphisms.
\end{defn}
This means, for example, that all partition functions are nonzero and all state spaces are one-dimensional.

\begin{thm}
\label{classpartdefnthm}
Let $\beta\in H^n(B\O_n\times B\Z/2;\Z/2)$. Then there is an invertible TQFT $\Zcl_\beta\colon
\Bord_n(B\Z/2)\to\Vect_\C$ of $n$-manifolds equipped with a principal $\Z/2$-bundle, unique up to
isomorphism, such that for any closed $n$-manifold $M$ and principal $\Z/2$-bundle $P\to M$,
\begin{equation}
\label{classpartfn}
	\Zcl_\beta(M, P) = (-1)^{\ang{\beta(M,P), [M]}},
\end{equation}
where $\beta(M, P)$ denotes the pullback of $\beta$ under a map $M\to B\O_n\times B\Z/2$ classifying $TM$ and
$P$.\footnote{The classifying map is unique up to homotopy, so $\beta(M, P)$ does not depend on this choice.}
\end{thm}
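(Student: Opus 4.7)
The plan is to reduce the theorem to the classification of invertible TQFTs by bordism invariants. First I will show that the formula $(M,P)\mapsto(-1)^{\ang{\beta(M,P),[M]}}$ descends to a homomorphism $\varphi_\beta\colon\Omega_n^\O(B\Z/2)\to\{\pm 1\}$; then an appeal to the classification of invertible theories produces $\Zcl_\beta$ and yields uniqueness up to isomorphism.

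For the bordism invariance, I would begin by observing that $H^*(B\O_n;\Z/2)=\Z/2[w_1,\ldots,w_n]$ is a quotient of $H^*(B\O;\Z/2)$, so $\beta$ lifts to a class $\tilde\beta\in H^n(B\O\times B\Z/2;\Z/2)$. If $(M,P)=\partial(W,Q)$ is a null-bordism, then $TW|_M\cong TM\oplus\underline{\R}$, so the stable classifying map $M\to B\O$ of $TM$ extends over $W$, and $P$ extends to $Q$ by hypothesis. Hence $\tilde\beta(M,P)=i^*\tilde\beta(W,Q)$ for the inclusion $i\colon M\hookrightarrow W$, and by the standard Stokes/Poincar\'e--Lefschetz argument one has $\ang{i^*\alpha,[M]}=0$ for any $\alpha\in H^n(W;\Z/2)$, so $\varphi_\beta$ is well defined and additive on disjoint unions.

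Next I would invoke the classification of invertible TQFTs (e.g.\ via Freed--Hopkins, or by a more hands-on inductive construction on the cells of a handle decomposition of bordisms): any homomorphism $\Omega_n^\O(X)\to\C^\times$ is the partition function of an invertible TQFT on $\Bord_n(X)$, unique up to isomorphism, and two invertible TQFTs with identical partition functions are isomorphic. Applying this with $X=B\Z/2$ to $\varphi_\beta$ produces $\Zcl_\beta$ and establishes uniqueness.

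The main obstacle, or at least the step requiring the most outside input, is the invocation of this classification; the bordism-invariance check is a routine characteristic-class computation. An alternative to citing the classification is to build $\Zcl_\beta$ explicitly, fixing combinatorial (cellular or simplicial) representatives for the Stiefel--Whitney classes and for the $\Z/2$-bundle cohomology, defining the state spaces as tensor products of local line factors on an $(n-1)$-manifold, and then verifying independence of all choices---this is more concrete but considerably more technical, and it is likely the route needed if $\Zcl_\beta$ is to be upgraded to a fully extended theory later in the paper.
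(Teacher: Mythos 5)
Your proposal is correct and follows essentially the same route as the paper: observe that \eqref{classpartfn} defines a $\{\pm 1\}$-valued bordism invariant of manifolds with a principal $\Z/2$-bundle, then invoke the general result that such an invariant is realized by an invertible TQFT, unique up to isomorphism (the paper cites Yonekura \cite[\S 4.2, Theorem 4.4]{Yon18} for this step where you cite Freed--Hopkins, but the content is the same). Your explicit verification of bordism invariance via a stable lift $\tilde\beta$ and the Poincar\'e--Lefschetz argument is a detail the paper leaves implicit, and it is a correct way to fill that in.
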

\begin{proof}
The assignment~\eqref{classpartfn} is a $\set{\pm 1}$-valued bordism invariant of manifolds eq:wuipped with a
principal $\Z/2$-bundle. Given such a bordism invariant, Yonekura~\cite[\S4.2]{Yon18} constructs an invertible
TQFT valued in $\Line_\C$ whose partition function recovers the bordism invariant, and proves that
it is unique up to isomorphism.
\end{proof}
We call $\Zcl_\beta$ the \term{classical $\Z/2$-gauge-gravity theory} for $\beta$, and call $\beta$ the
\term{Lagrangian} for the theory.
\begin{rem}
The name ``gauge-gravity'' refers to the fact that the Lagrangian $\beta$ can have terms depending both on the
principal $\Z/2$-bundle (a gauge field) and characteristic classes of the underlying manifold (which, due to the
relationship between characteristic classes and curvature, are sometimes called gravitational terms). This idea
also appears for the anomaly TQFTs in~\cite{GM18, STY18}, which are similar to the classical gauge-gravity theories
considered in this paper.
\end{rem}
\begin{rem}
It is also possible to describe $\Zcl_\beta$ homotopically, following the Freed-Hopkins approach to invertible
TQFTs~\cite{FreedHopkins}; we briefly sketch the construction. If a homomorphism of commutative monoids $A\to B$
factors through the subgroup of units $B^\times\inj B$, then it also factors through the group completion $A\to
K(A)$; in a similar way, if a morphism of symmetric monoidal categories $\fC\to\fD$ factors through the Picard
groupoid of units $\fD^\times\inj\fD$, it also factors through the groupoid completion of $\fC$, which is also a
Picard groupoid. The geometric realization of a Picard groupoid $\cat G$ is canonically an infinite loop space, and
its associated spectrum, called the \term{classifying spectrum} of $\cat G$ and denoted $\abs{\cat G}$, is a
\term{stable 1-type}, i.e.\ its only nonzero homotopy groups are $\pi_0\abs{\cat G}$ and $\pi_1\abs{\cat
G}$~\cite{JO12}. The upshot is that an invertible TQFT $\Zcl\colon\Bord_n(B\Z/2)\to\Line_\C$ determines and is
determined up to isomorphism by the homotopy class of the
map
\begin{equation}
	\abs{\Zcl}\colon\abs{\Bord_n(B\Z/2)}\to\abs{\Line_\C}
\end{equation}
it induces on classifying spectra.

If $E$ is a spectrum, let $E\ang{m,n}$ denote the truncation of $E$ to a spectrum with homotopy groups only in
degrees between $m$ and $n$, inclusive. Then there are weak equivalences
\begin{itemize}
	\item $\abs{\Bord_n(B\Z/2)}\simeq (\Sigma\MTO_n\wedge (B\Z/2)_+)\ang{0,1}$~\cite{GMTW,
	Ngu17},\footnote{This fact has been proven or sketched in several additional ways: see also \cite{Aya09, Lur09,
	BM14, AF17, SP17}.} and \item $\abs{\Line_\C}\simeq \Sigma H\C^\times$.
\end{itemize}
Here $\MTO_n$ is a \term{Madsen-Tillmann spectrum}: if $V_n\to B\O_n$ denotes the tautological bundle, $\MTO_n$ is
the Thom spectrum of $-V_n\to B\O_n$.

Therefore an isomorphism class of invertible $n$-dimensional TQFTs for manifolds with a principal $\Z/2$-bundle is
determined by an element of
\begin{equation}
	[(\Sigma\MTO_n\wedge (B\Z/2)_+)\ang{0,1}, \Sigma H\C^\times] \cong H^0(\MTO_n\wedge (B\Z/2)_+;
	\C^\times),
\end{equation}
and $\beta\in H^n(B\O_n\times B\Z/2;\Z/2)$ yields such an element through the mod 2 Thom isomorphism followed by
the map induced on cohomology by $\Z/2 \cong\set{\pm 1}\inj\C^\times$. Thus it defines an invertible TQFT
$(\Zcl_\beta)'$ up to isomorphism. Tracing through the Pontrjagin-Thom construction, one can prove that its
partition functions agree with those in~\eqref{classpartfn}, and hence by Yonekura's uniqueness result~\cite[Theorem
4.4]{Yon18}, $(\Zcl_\beta)'\cong\Zcl_\beta$.

This approach readily generalizes to extended invertible TQFTs, as in~\cite{SP17}, and the classical gauge-gravity
TQFTs can be realized as fully extended TQFTs valued in $n$-algebras, as in~\cite[\S8]{FHLT}, or $n$-vector
spaces, using the calculation of the classifying spectrum of the $n$-category of $n$-vector spaces
in~\cite[\S7.4]{SP17}.
\end{rem}
The partition functions of the classical gauge-gravity TQFT for $\beta$ resemble those of classical
Dijkgraaf-Witten theory~\cite{DijkgraafWitten, FreedQuinn} for the gauge group $\Z/2$, though the Lagrangians of the
former can also contain Stiefel-Whitney classes. If $\beta$ factors through the inclusion $H^n(B\Z/2;\Z/2)\inj
H^n(B\O_n\times B\Z/2;\Z/2)$, then $\Zcl_\beta$ is isomorphic to a classical $\Z/2$-Dijkgraaf-Witten theory.

If $\gamma\in H^n(B\Z/2;\R/\Z)$, we let $\DWcl_\gamma$ denote classical $\Z/2$-Dijkgraaf-Witten theory with
Lagrangian $\gamma$.
\begin{prop}
\label{is_classical_DW}
Let $f\colon\Z/2\inj\R/\Z$ denote the map sending $1\mapsto 1/2$, as well as the map $f\colon H^*(X;\Z/2)\to
H^*(X;\R/\Z)$ it induces on cohomology. Suppose $\beta$ contains no Stiefel-Whitney terms, i.e.\ $\beta$ factors
through $H^n(B\Z/2;\Z/2)\inj H^n(B\O_n\times B\Z/2;\Z/2)$. Then, as TQFTs of oriented manifolds equipped with
principal $\Z/2$-bundles, $\Zcl_\beta\cong\DWcl_{f(\beta)}$.
\end{prop}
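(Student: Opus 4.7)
The plan is to reduce the statement to the uniqueness result of Yonekura already invoked in the proof of Theorem~\ref{classpartdefnthm}, by checking that the two theories in question are both invertible on the oriented bordism category $\Bord_n^{\mathrm{SO}}(B\Z/2)$ and have the same partition functions.

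First I would unpack the partition functions on both sides. Since $\beta$ is assumed to have no Stiefel--Whitney terms, $\beta(M,P)$ depends only on $P$ (through the classifying map $M\to B\Z/2$), and formula~\eqref{classpartfn} gives
\begin{equation*}
	\Zcl_\beta(M,P) = (-1)^{\ang{\beta(P),[M]}}.
\end{equation*}
On the Dijkgraaf--Witten side, the partition function of $\DWcl_\gamma$ on $(M,P)$ with $M$ closed oriented is $\exp(2\pi i\ang{\gamma(P),[M]})$. Substituting $\gamma = f(\beta)$ and using that for any $x\in\Z/2$ we have $\exp(2\pi i\, f(x)) = \exp(\pi i\, x) = (-1)^x$, I would conclude that the two partition functions agree on all closed oriented $(M,P)$.

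Next I would note that both $\Zcl_\beta$ and $\DWcl_{f(\beta)}$ are invertible TQFTs on $\Bord_n^{\mathrm{SO}}(B\Z/2)$: invertibility of $\Zcl_\beta$ is built into Theorem~\ref{classpartdefnthm} and is preserved under restriction from unoriented to oriented bordism, while the classical Dijkgraaf--Witten theory (as opposed to its gauged/quantum counterpart) is invertible because its partition function on each $(M,P)$ is a unit in $\C$ and its state spaces on closed $(n-1)$-manifolds with a $\Z/2$-bundle are one-dimensional. Having checked that both are invertible and have matching bordism invariants, Yonekura's uniqueness theorem~\cite[Theorem~4.4]{Yon18} (applied now to oriented bordism rather than unoriented, exactly as in the proof of Theorem~\ref{classpartdefnthm}) yields the desired isomorphism $\Zcl_\beta\cong\DWcl_{f(\beta)}$.

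The only subtle step is checking that the natural description of $\DWcl_\gamma$ I am using on oriented manifolds really is invertible and has the stated partition function; depending on how $\DWcl_\gamma$ is formally defined in the paper, this may require a brief comparison between the definition via the exponentiated cocycle pairing (in the spirit of Freed--Quinn~\cite{FreedQuinn}) and the Yonekura-style construction of invertible theories from bordism invariants. Once that comparison is in place, the matching of partition functions plus invertibility immediately gives the isomorphism.
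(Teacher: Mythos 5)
Your proposal is correct and follows essentially the same route as the paper: compute both partition functions on closed oriented $(M,P)$, check they agree via the compatibility $\exp(2\pi i\, f(x)) = (-1)^x$, and invoke Yonekura's uniqueness theorem for invertible TQFTs determined by their partition functions. The only detail the paper spells out that you gloss over is the change-of-coefficients bookkeeping relating the pairings against $[M]_{\Z/2}$ and $[M]_{\Z}$ (naturality of the cap product and of $f$ on cohomology), which is exactly the ``brief comparison'' you flag at the end.
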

\begin{proof}
Let $M$ be a closed, oriented $n$-manifold, $P\to M$ be a principal $\Z/2$-bundle, and $\beta$ be as in the
proposition statement. Let $\phi\colon M\to B\Z/2$ be a classifying map for $P$.  Let $[M]_\Z$, resp.\
$[M]_{\Z/2}$, denote the fundamental class of $M$ in integral, resp.\ $\Z/2$, homology.

The partition function of classical $\Z/2$-Dijkgraaf-Witten theory with Lagrangian $f(\beta)$ is $\DWcl_{f(\beta)}(M,
P) = e^{i\pi\ang{(\phi^*(f(\beta)), [M]_\Z}}$~\cite[Theorem 1.7]{FreedQuinn}. Naturality of the cap product under change of coefficients implies $f(\ang{x, [M]_{\Z/2}}) = \ang{f(x), [M]_\Z}$ for any $x\in H^n(M;\Z/2)$, and naturality of the
change-of-coefficients map on cohomology implies that $\phi^*(f(\beta)) = f(\phi^*(\beta))$, so
$f(\ang{\phi^*\beta, [M]_{\Z/2}}) = \ang{\phi^*(f(\beta)), [M]_\Z}$. If $a\in\Z/2$, $(-1)^a = e^{i\pi f(a)}$, so
\begin{equation}
	Z_\beta(M, P) = (-1)^{\ang{\phi^*\beta, [M]_{\Z/2}}} = e^{i\pi\ang{\phi^*(f(\beta)), [M]_\Z}} =
	\DWcl_{f(\beta)}(M, P).
\end{equation}
Since the partition functions for these theories are identical, then by~\cite[Theorem 4.4]{Yon18},
$Z_\beta\cong\DWcl_{f(\beta)}$.
\end{proof}
\begin{lem}
Let $\gamma\in H^n(B\O_n\times B\Z/2;\Z/2)$ be a cohomology class which vanishes when pulled back to all closed
$n$-manifolds via a classifying map for the tangent bundle and any principal $\Z/2$-bundle. Then,
$\Zcl_\beta\cong\Zcl_{\beta+\gamma}$.
\end{lem}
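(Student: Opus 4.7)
The plan is to invoke the uniqueness result from Theorem~\ref{classpartdefnthm} (via Yonekura) and reduce the statement to comparing partition functions. Since both $\Zcl_\beta$ and $\Zcl_{\beta+\gamma}$ are invertible TQFTs built by the same construction, they are each determined up to isomorphism by their partition functions on closed $n$-manifolds with principal $\Z/2$-bundle, so it suffices to check that these agree.

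Fix a closed $n$-manifold $M$ with a principal $\Z/2$-bundle $P\to M$, and let $f\colon M\to B\O_n\times B\Z/2$ classify $TM$ and $P$. By naturality, $(\beta+\gamma)(M,P)=\beta(M,P)+\gamma(M,P)$ in $H^n(M;\Z/2)$. By bilinearity of the pairing between cohomology and homology (over $\Z/2$),
\begin{equation}
	\langle(\beta+\gamma)(M,P),[M]\rangle = \langle\beta(M,P),[M]\rangle + \langle\gamma(M,P),[M]\rangle \in \Z/2.
\end{equation}
The hypothesis that $\gamma$ vanishes on every such $(M,P)$ gives $\gamma(M,P)=0\in H^n(M;\Z/2)$, so the second term is zero; therefore
\begin{equation}
	\Zcl_{\beta+\gamma}(M,P) = (-1)^{\langle(\beta+\gamma)(M,P),[M]\rangle} = (-1)^{\langle\beta(M,P),[M]\rangle} = \Zcl_\beta(M,P).
\end{equation}

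Since the two invertible TQFTs have identical partition functions on all closed $n$-manifolds with principal $\Z/2$-bundle, the uniqueness clause in Theorem~\ref{classpartdefnthm} (i.e.\ \cite[Theorem 4.4]{Yon18}) yields $\Zcl_\beta\cong\Zcl_{\beta+\gamma}$. There is no real obstacle here: the whole argument rests on the fact that the classical gauge-gravity TQFTs were constructed precisely so as to be characterized by their partition functions, and the hypothesis on $\gamma$ is tailor-made to make those partition functions coincide.
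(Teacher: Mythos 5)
Your proof is correct and follows the same route as the paper: verify via the defining formula~\eqref{classpartfn} that the partition functions of $\Zcl_\beta$ and $\Zcl_{\beta+\gamma}$ coincide on all closed $n$-manifolds with principal $\Z/2$-bundle, then invoke the uniqueness of invertible TQFTs with given partition functions. You simply spell out the linearity computation that the paper leaves implicit.
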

\begin{proof}
By~\eqref{classpartfn}, $Z_\beta(M) = Z_{\beta+\gamma}(M)$ for all closed $n$-manifolds $M$ with a principal
$\Z/2$-bundle. We have seen that invertible TQFTs of manifolds with a principal $\Z/2$-bundle are determined up to
isomorphism by their partition functions, so $Z_\beta\cong Z_{\beta+\gamma}$.
\end{proof}
For example, in dimension 3, $w_1^2\alpha = w_2\alpha$ on all 3-manifolds, so $\Zcl_{w_1^2\alpha}\cong
\Zcl_{w_2\alpha}$.

\begin{cor}
\label{all_TDW_arise}
If $n$ is odd, every classical $\Z/2$-Dijkgraaf-Witten theory is isomorphic to $\Zcl_\beta$ for some $\beta\in
H^n(B\Z/2;\Z/2)\inj H^n(B\O_n\times B\Z/2;\Z/2)$.
\end{cor}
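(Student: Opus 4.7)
The plan is to reduce the statement, via Proposition~\ref{is_classical_DW}, to a coefficient-change surjectivity. Classical $\Z/2$-Dijkgraaf-Witten theories in dimension $n$ are parametrized, up to isomorphism, by their Lagrangians $\gamma\in H^n(B\Z/2;\R/\Z)$, and Proposition~\ref{is_classical_DW} identifies $\Zcl_\beta$ with $\DWcl_{f(\beta)}$ whenever $\beta\in H^n(B\Z/2;\Z/2)$ is viewed as a class in $H^n(B\O_n\times B\Z/2;\Z/2)$ pulled back along the projection to $B\Z/2$. Thus it suffices to show that the induced map $f_\ast\colon H^n(B\Z/2;\Z/2)\to H^n(B\Z/2;\R/\Z)$ is surjective.

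To establish this I would invoke the Bockstein long exact sequence associated with the short exact sequence of coefficient groups $0\to\Z/2\xrightarrow{f}\R/\Z\xrightarrow{\cdot 2}\R/\Z\to 0$, which yields
\begin{equation*}
H^n(B\Z/2;\Z/2)\xrightarrow{f_\ast} H^n(B\Z/2;\R/\Z)\xrightarrow{\cdot 2} H^n(B\Z/2;\R/\Z).
\end{equation*}
It then remains to check that multiplication by $2$ on $H^n(B\Z/2;\R/\Z)$ is the zero map. This follows from the standard computation: the long exact sequence arising from $0\to\Z\to\R\to\R/\Z\to 0$, combined with the vanishing of $H^{>0}(B\Z/2;\R)$, identifies $H^n(B\Z/2;\R/\Z)\cong H^{n+1}(B\Z/2;\Z)$, which is a group of exponent~$2$.

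Given $\gamma\in H^n(B\Z/2;\R/\Z)$, I would then pick a lift $\beta\in H^n(B\Z/2;\Z/2)$ with $f_\ast(\beta)=\gamma$, pull it back to $H^n(B\O_n\times B\Z/2;\Z/2)$ along the projection, and apply Proposition~\ref{is_classical_DW} to conclude $\Zcl_\beta\cong\DWcl_{f(\beta)}=\DWcl_\gamma$. The only nontrivial step is the Bockstein surjectivity; the oddness hypothesis on $n$ enters only insofar as it ensures $H^n(B\Z/2;\R/\Z)$ can be nonzero at all (for $n$ even it vanishes, and the corollary holds trivially with $\beta=0$).
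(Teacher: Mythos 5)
Your proof is correct and takes the same route as the paper: reduce to Proposition~\ref{is_classical_DW} together with surjectivity of $f\colon H^n(B\Z/2;\Z/2)\to H^n(B\Z/2;\R/\Z)$ for odd $n$. The paper simply asserts that surjectivity, whereas you verify it via the Bockstein sequence for $0\to\Z/2\to\R/\Z\to\R/\Z\to 0$; that verification is accurate (one could equally just note both groups are $\Z/2$ for odd $n$ and check $f$ is nonzero there).
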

\begin{proof}
When $n$ is odd, the map $f\colon H^n(B\Z/2;\Z/2)\to H^n(B\Z/2;\R/\Z)$ is surjective; then the result follows from
\cref{is_classical_DW}.
\end{proof}
\subsection{Discussion of the quantum theories}
\label{constquant}

We construct the quantum theory $Z_\beta$ using the finite path integral approach of~\cite[\S3]{FHLT}; see
also~\cite{Mor15, Tro16} for a more detailed account and~\cite{SW18} for a related construction. Let $\Gpd$ denote
the category of spans of finite groupoids: the objects of $\Gpd$ are finite groupoids, and a morphism from $X_1$ to
$X_2$ is data of a finite groupoid $Y$ and functors $p_1\colon Y\to X_1$ and $p_2\colon Y\to X_2$, considered up to equivalence of $(Y, p_1, p_2)$. Let
$\Gpd(\Vect_\C)$ denote the category whose objects are pairs $(X, V)$, where $X$ is a groupoid and $V\to X$ is a
complex vector bundle,\footnote{A (complex) vector bundle over a groupoid $\cG$, denoted $V\to\cG$, is a functor
$V\colon \cG\to\Vect_\C$, and its space of sections is $\varinjlim L$. We will always assume these vector bundles
are finite-dimensional, meaning they factor through the full subcategory of finite-dimensional vector spaces.} and
whose morphisms are equivalence classes of spans
\begin{equation}
\label{spaninlocsys}
\begin{gathered}
\xymatrix@dr{
	Y\ar[r]^{p_2}\ar[d]_{p_1} & X_2\\
	X_1
}
\end{gathered}
\end{equation}
together with data of vector bundles $V_i\to X_i$ and $W\to Y$ and morphisms $\phi_i\colon p_i^*V_i\to W$ for $i
=1,2$. For any $y\in Y$, this morphism determines a linear map $\vp(y)\colon V_1(p_1(y))\to V_2(p_2(y))$ by a
push-pull construction. Disjoint union of groupoids defines a symmetric monoidal structure on $\Gpd(\Vect_\C)$.

We next define the ``quantization'' functor $\Sigma\colon\Gpd(\Vect_\C)\to\Vect_\C$, which on to an object assigns
\begin{equation}
	\label{what_is_quantization}
	\Sigma\colon (X, V)\mapsto \Gamma(V)\coloneqq \varinjlim_{x\in X} V(x),
\end{equation}
i.e.\ regard $V$ as a $\Vect_\C$-valued diagram indexed by the category $X$, and take the colimit of this diagram.
Given a morphism $(Y,W,\phi_1,\phi_2)$ as above, the maps $\vp(y)$ for $y\in Y$ pass to the colimit to define a map
\begin{equation}
	\widetilde\vp\colon\pi_0 Y\to\Hom(\Gamma(X_1, V_1), \Gamma(X_2, V_2)).
\end{equation}
Then, $\Sigma$ assigns to this morphism the linear map
\begin{equation}
	\Sigma(Y, W)\coloneqq \sum_{[y]\in\pi_0Y}\frac{\widetilde\vp(y)}{\abs{\Aut(y)}}\in\Hom(\Gamma(X_1, V_1),
	\Gamma(X_2, V_2)).
\end{equation}
This functor is symmetric monoidal.

Given a TQFT $\Zcl\colon\Bord_n(B\Z/2)\to\Vect_\C$, the functor
$F_{\Zcl}\colon\Bord_n\to\Gpd(\Vect_\C)$ sending
\begin{equation}
	F_{\Zcl}\colon M\mapsto \paren{\Bun_{\Z/2}(M), P\mapsto \Zcl(M, P)}
\end{equation}
is also symmetric monoidal, and therefore the composition
\begin{equation}
\label{quanteqn}
\xymatrix{
	Z\colon\Bord_n\ar[r]^-{F_{\Zcl}} & \Gpd(\Vect_\C)\ar[r]^-\Sigma & \Vect_\C
}\end{equation}
is symmetric monoidal, i.e.\ a (nonextended) TQFT of unoriented manifolds.
\begin{defn}
\label{quantdefn}
Given a TQFT $\Zcl\colon\Bord_n(B\Z/2)\to\Vect_\C$, the TQFT $Z$ in~\eqref{quanteqn} above is called the
\term{quantum theory} associated to $\Zcl$. In particular, we denote the quantum theory associated to $\Zcl_\beta$
by $Z_\beta$, and call it the \term{(quantum) gauge-gravity theory} for $\beta$. In this case we call $\beta$ the
\term{Lagrangian} of the theory.
\end{defn}

\begin{prop}\hfill
\label{quantchar}
\begin{enumerate}
	\item\label{quantpartfn} Let $M$ be a closed $n$-manifold. Then, the partition function $Z_\beta(M)$ is
	\begin{equation}
		Z_\beta(M) = \sum_{[P]\in\pi_0\Bun_{\Z/2}(M)} \frac{(-1)^{\ang{\beta(P), [M]}}}{\abs{\Aut(P)}}.
	\end{equation}
	\item\label{quantstatespace} Let $N$ be a closed $(n-1)$-manifold. Then, define a line bundle
	$L_\beta\to\Bun_{\Z/2}(N)$ which
	\begin{itemize}
		\item assigns $\C$ to every object, and
		\item assigns to an automorphism $\phi\in\Aut(P)$ multiplication by $\Zcl_\beta(S^1\times N, P_\phi)$.
	\end{itemize}
	Then the state space of $N$ is $Z_\beta(N)\cong\Gamma(L_\beta)$.
\end{enumerate}
\end{prop}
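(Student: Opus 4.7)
The plan is to unwind the definition $Z_\beta = \Sigma\circ F_{\Zcl_\beta}$ in each case and match the result against the stated formula, invoking~\eqref{classpartfn} for the classical partition functions.

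For Part~\ref{quantpartfn}, I will view the closed $n$-manifold $M$ as a morphism $\emptyset\to\emptyset$ in $\Bord_n$. Then $F_{\Zcl_\beta}(M)$ is the span $\mathrm{pt}\leftarrow\Bun_{\Z/2}(M)\to\mathrm{pt}$ equipped on the apex with the line bundle $P\mapsto\Zcl_\beta(M,P)$. Applying the definition of $\Sigma$ on morphisms to a span between one-object groupoids reduces $\widetilde\vp([P])$ to the scalar $\Zcl_\beta(M,P)$, and the general sum becomes $\sum_{[P]}\Zcl_\beta(M,P)/\abs{\Aut(P)}$; substituting~\eqref{classpartfn} gives the claimed expression. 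This part is essentially bookkeeping around the definitions of $F_{\Zcl_\beta}$ and $\Sigma$.

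For Part~\ref{quantstatespace}, the object $F_{\Zcl_\beta}(N)$ is $(\Bun_{\Z/2}(N),V)$ with $V(P) := \Zcl_\beta(N,P)$, and by~\eqref{what_is_quantization} we have $Z_\beta(N)=\Gamma(V)$. The task is thus to produce an isomorphism of line bundles $V\cong L_\beta$ over $\Bun_{\Z/2}(N)$. Since $\Zcl_\beta$ is invertible, each $V(P)$ is one-dimensional, so choosing a trivialization $V(P)\cong\C$ for each $[P]$ encodes $V$ entirely in terms of the scalars by which $\Aut(P)$ acts. Matching this with $L_\beta$ reduces to checking that for $\phi\in\Aut(P)$, the scalar $\Zcl_\beta(\phi)\in\C^\times$ acting on $\Zcl_\beta(N,P)$ equals $\Zcl_\beta(S^1\times N, P_\phi)$.

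This last identification is the step I expect to be the main obstacle: it is an instance of the standard TQFT ``trace equals mapping-torus partition function'' principle, applied to the invertible theory $\Zcl_\beta$ on the pair $(N,P)$ with automorphism $\phi$. To execute it, I will present $(S^1\times N, P_\phi)$ as the composition in $\Bord_n(B\Z/2)$ of the $\phi$-twisted cylinder on $(N,P)$ with the evaluation and coevaluation bordisms for the object $(N,P)$; the gluing axiom then gives $\Zcl_\beta(S^1\times N, P_\phi) = \mathrm{tr}(\Zcl_\beta(\phi))$, and one-dimensionality of $V(P)$ identifies this trace with the scalar. With this in hand, the chosen trivializations identify $V$ with $L_\beta$, and passing to colimits yields $Z_\beta(N)\cong\Gamma(L_\beta)$.
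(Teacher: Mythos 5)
Your proposal is correct and follows essentially the same route as the paper: both parts unwind $Z_\beta = \Sigma\circ F_{\Zcl_\beta}$, with part (1) reducing to the groupoid-weighted sum of classical partition functions and part (2) hinging on the identity $\Zcl_\beta(S^1\times N, P_\phi) = \tr\bigl(\Zcl_\beta([0,1]\times N,\Cyl^\phi(P))\bigr)$ applied to a one-dimensional state space. The paper simply cites this trace identity as a general TQFT fact, whereas you sketch its proof via evaluation and coevaluation bordisms; that is a harmless (indeed welcome) elaboration, not a different argument.
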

Here $P_\phi\to S^1\times N$ denotes the \term{mapping torus} of $\phi$, i.e.\ the quotient of $[0,1]\times P$ by
$(0,x)\sim(1,\phi(x))$.
\begin{proof}
The proof boils down to figuring out what~\eqref{what_is_quantization} means in this context.

The partition function for $M$ is defined to be $Z_\beta(M\colon \varnothing\to\varnothing)$; since
$Z_\beta\coloneqq \Sigma\circ F_{\Zcl_\beta}$, we look at each functor in turn.

First, $F_{\Zcl_\beta}$ assigns to the bordism $M\colon\varnothing\to\varnothing$ a span
\begin{equation}
\begin{gathered}
\xymatrix@dr@R=1.3cm@C=1.3cm{
	(L\to\Bun_{\Z/2}(M))\ar[r]\ar[d] & (\underline\C\to\Bun_{\Z/2}(\varnothing)),\\
	(\underline\C\to\Bun_{\Z/2}(\varnothing))
}
\end{gathered}
\end{equation}
for some line bundle $L\to\Bun_{\Z/2}(M)$ such that for any $P\in\Bun_{\Z/2}(M)$, the induced map
$\vp(P)\colon\C\to\C$ is multiplication by the classical partition function $\Zcl_{\beta}(M, P)$. Therefore
\begin{equation}
	Z_\beta(M)\coloneqq \Sigma(L, \Bun_{\Z/2}(M)) =
	\sum_{[P]\in\pi_0\Bun_{\Z/2}(M)}\frac{\widetilde\vp(P)}{\abs{\Aut(P)}} =
	\sum_{[P]\in\pi_0\Bun_{\Z/2}(M)} \frac{\Zcl_{\beta}(M, P)}{\abs{\Aut(P)}}.
\end{equation}
Together with~\eqref{classpartfn}, this proves part~\eqref{quantpartfn}.

We address part~\eqref{quantstatespace} in a similar way. $F_{\Zcl_\beta}$ sends $N$ to a line bundle
$L_N\to\Bun_{\Z/2}(N)$, which to a principal $\Z/2$-bundle $P\to N$ assigns the complex line $\Zcl_\beta(N,P)$. For
a morphism $\phi\in\Aut(P)$, we first realize $\phi$ as a bordism: let $\Cyl^\phi(P)\to[0,1]\times N$ denote the
\term{mapping cylinder} of $\phi$, i.e.\ the space $P\times[0,1]\to N\times[0,1]$, interpreted as a bordism in
which $P$ is glued by the identity at $0$ and by $\phi$ at $1$. Then,
\begin{equation}
\label{preMCGtrace}
	L_N(\phi) = \Zcl_\beta([0,1]\times N, \Cyl^\phi(P))\colon\Zcl_\beta(N, P)\to\Zcl_\beta(N, P).
\end{equation}
General facts about TQFT imply that for any TQFT $Z\colon\Bord_n(B\Z/2)\to\Vect_\C$,
\begin{equation}
\label{MCGtrace}
	Z(S^1\times N, P_\phi) = \tr(Z([0,1\times N, \Cyl^\phi(P))\colon Z(N, P)\to Z(N, P)).
\end{equation}
This simplifies~\eqref{preMCGtrace} to
\begin{equation}
	L_N(\phi) = (\text{multiplication by }\Zcl_{\beta}(S^1\times N, P_\phi))\colon\Zcl_\beta(N, P)\to\Zcl_\beta(N,
	P).
\end{equation}
Thus $L_N\to\Bun_{\Z/2}(N)$ is isomorphic to the line bundle $L_\beta$ in the statement of the proposition, and
\begin{equation}
	Z_\beta(N) = \Sigma(\Bun_{\Z/2}(N), L_N) = \varinjlim L_N \cong \varinjlim L_\beta = \Gamma(L_\beta). \qedhere
\end{equation}
\end{proof}

The finite path integral approach to defining the quantum gauge-gravity theories means a few of their basic
properties are formal corollaries of their counterparts in the classical case, because an isomorphism of classical
theories determines an isomorphism of quantum theories.

\begin{cor}
\label{Wuisom}
Let $\gamma\in H^n(B\O_n\times B\Z/2;\Z/2)$ be a cohomology class which vanishes when pulled back to all closed
$n$-manifolds via a classifying map for the tangent bundle and any principal $\Z/2$-bundle. Then, $Z_\beta\cong
Z_{\beta+\gamma}$.
\end{cor}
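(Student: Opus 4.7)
The plan is to bootstrap this from the classical analogue proved earlier in the excerpt (the unnamed lemma involving the hypothesis ``$\gamma$ vanishes when pulled back to all closed $n$-manifolds\dots''), together with the fact that the finite path integral construction of \cref{quantdefn} is functorial in the input classical theory. Concretely, the construction $\Zcl\mapsto Z$ defined by \eqref{quanteqn} factors as postcomposition with the fixed symmetric monoidal functor $\Sigma$ after the assignment $\Zcl\mapsto F_{\Zcl}$, so an isomorphism of classical theories should automatically produce an isomorphism of the associated quantum theories.

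First I would invoke the earlier lemma: the hypothesis on $\gamma$ is exactly what is needed to conclude $\Zcl_\beta\cong\Zcl_{\beta+\gamma}$ as invertible TQFTs in $\Bord_n(B\Z/2)\to\Vect_\C$. Fix such an isomorphism $\eta\colon\Zcl_\beta\xrightarrow{\sim}\Zcl_{\beta+\gamma}$, which is a monoidal natural isomorphism.

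Next I would unpack how $\eta$ induces a monoidal natural isomorphism $F_{\Zcl_\beta}\cong F_{\Zcl_{\beta+\gamma}}$ of functors $\Bord_n\to\Gpd(\Vect_\C)$. On an object $M$, both functors assign the groupoid $\Bun_{\Z/2}(M)$, and $\eta$ supplies an isomorphism of the associated complex line bundles $P\mapsto\Zcl_\beta(M,P)$ and $P\mapsto\Zcl_{\beta+\gamma}(M,P)$ over this groupoid (naturality in automorphisms of $P$ is precisely the statement that $\eta$ is natural with respect to mapping cylinders, cf.\ the computation in the proof of \cref{quantchar}\eqref{quantstatespace}). On a bordism, the corresponding spans agree since they are determined by the bundle groupoids of the source, target, and bordism; the vector bundle data on the spans are intertwined by $\eta$. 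Hence $\eta$ lifts to an isomorphism in the functor category $[\Bord_n,\Gpd(\Vect_\C)]$.

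Finally, postcomposing with the symmetric monoidal functor $\Sigma\colon\Gpd(\Vect_\C)\to\Vect_\C$ sends this natural isomorphism to a natural isomorphism $Z_\beta = \Sigma\circ F_{\Zcl_\beta}\cong\Sigma\circ F_{\Zcl_{\beta+\gamma}} = Z_{\beta+\gamma}$, as desired. The only step requiring care is the second one, checking that the data of $\eta$ is precisely what is needed to produce a morphism in $\Gpd(\Vect_\C)$ over each bordism; this is essentially a bookkeeping exercise and not a genuine obstacle, since the line bundle data in the span \eqref{spaninlocsys} for $F_{\Zcl}(M)$ depends only on the classical partition functions of $\Zcl$ on $M$ and on its mapping cylinders, all of which are preserved by any isomorphism of classical theories.
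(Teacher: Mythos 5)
Your proposal is correct and follows exactly the paper's route: the paper derives \cref{Wuisom} as a formal corollary of the classical lemma, noting that an isomorphism of classical theories determines an isomorphism of the associated quantum theories via the finite path integral construction. You simply spell out in more detail the step the paper leaves implicit (that $\eta$ lifts through $F_{\Zcl}$ and then $\Sigma$), which is a faithful elaboration rather than a different argument.
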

\begin{cor}
\label{quantumTDW}
Suppose $\beta$ contains no Stiefel-Whitney terms (in the sense of \cref{is_classical_DW}). Then,
$Z_\beta\cong\DDW_\beta$, the quantum $\Z/2$-Dijkgraaf-Witten theory with Lagrangian $\beta$.
\end{cor}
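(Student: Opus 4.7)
The plan is to derive this corollary formally from Proposition~\ref{is_classical_DW} together with the functoriality of the finite path integral construction $\Zcl \mapsto Z = \Sigma\circ F_{\Zcl}$ defined in~\eqref{quanteqn}. The first step is to apply Proposition~\ref{is_classical_DW}, which gives an isomorphism of classical TQFTs $\Zcl_\beta \cong \DWcl_{f(\beta)}$ on oriented $n$-manifolds equipped with principal $\Z/2$-bundles. Since both theories in the corollary are unoriented, one needs to check (or note) that the quantum DW theory under consideration is indeed an oriented theory, or alternatively that both sides agree on the relevant subcategory; this is built into the convention $\DDW_\beta$ used in the literature for $\beta$ with no Stiefel-Whitney dependence.

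Next, I would observe that the assignment $\Zcl \mapsto F_{\Zcl}$ sending a classical invertible theory to the groupoid-with-line-bundle functor
\begin{equation}
F_{\Zcl}\colon M \mapsto \bigl(\Bun_{\Z/2}(M),\, P\mapsto \Zcl(M,P)\bigr)
\end{equation}
depends on $\Zcl$ only through its values on bordisms equipped with $\Z/2$-bundles, and is natural in $\Zcl$: a natural isomorphism $\Zcl \cong \Zcl'$ of classical theories induces a natural isomorphism $F_{\Zcl}\cong F_{\Zcl'}$ of symmetric monoidal functors $\Bord_n \to \Gpd(\Vect_\C)$. Composing with the fixed symmetric monoidal functor $\Sigma$ then yields a natural isomorphism of the quantum theories.

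Applying this functoriality to the classical isomorphism from step one gives $Z_\beta \cong Z_{f(\beta)}^{\mathrm{DW}}$, where the right-hand side is the quantum theory built from $\DWcl_{f(\beta)}$ via the same finite path integral recipe. The final step is to identify this with $\DDW_\beta$ as defined in the literature, which follows because the Freed-Quinn/FHLT construction of quantum Dijkgraaf-Witten theory is precisely the finite path integral~\eqref{quanteqn} applied to the classical DW theory; concretely, Proposition~\ref{quantchar}\eqref{quantpartfn} recovers the familiar partition function
\begin{equation}
\DDW_\beta(M) = \sum_{[P]\in\pi_0\Bun_{\Z/2}(M)} \frac{e^{i\pi\ang{\phi^*(f(\beta)), [M]_\Z}}}{\abs{\Aut(P)}}
\end{equation}
on oriented $M$, matching the standard formula.

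The main obstacle is essentially bookkeeping rather than mathematical content: one must be careful about the distinction between the $\Z/2$-valued Lagrangian $\beta$ and its $\R/\Z$-valued image $f(\beta)$, and about the oriented-versus-unoriented framing, since the classical isomorphism in Proposition~\ref{is_classical_DW} was stated only on oriented manifolds. Once these conventions are pinned down, the corollary is immediate from the fact that the quantization functor $\Zcl \mapsto \Sigma \circ F_{\Zcl}$ sends isomorphisms to isomorphisms.
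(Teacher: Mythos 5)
Your proposal is correct and follows essentially the same route as the paper, which treats this corollary as a formal consequence of \cref{is_classical_DW} together with the observation that an isomorphism of classical theories induces, via the finite path integral $\Zcl\mapsto\Sigma\circ F_{\Zcl}$, an isomorphism of the associated quantum theories. Your added care about the oriented-versus-unoriented framing and the $\Z/2$- versus $\R/\Z$-valued Lagrangian is a reasonable bookkeeping point that the paper leaves implicit, but it does not change the argument.
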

\begin{cor}
\label{all_quantum_TDW}
If $n$ is odd, every quantum $\Z/2$-Dijkgraaf-Witten theory is isomorphic to $Z_\beta$ for some $\beta\in
H^n(B\Z/2;\Z/2)\inj H^n(B\O_n\times B\Z/2;\Z/2)$.
\end{cor}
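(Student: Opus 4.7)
The plan is to obtain this result as the quantum shadow of \cref{all_TDW_arise}, leveraging the functoriality of the finite path integral construction in \cref{quantdefn}. Specifically, given a quantum $\Z/2$-Dijkgraaf-Witten theory $\DDW_\gamma$ with Lagrangian $\gamma\in H^n(B\Z/2;\R/\Z)$, I would locate a class $\beta\in H^n(B\Z/2;\Z/2)$ (viewed inside $H^n(B\O_n\times B\Z/2;\Z/2)$ via pullback) whose associated gauge-gravity theory $Z_\beta$ is isomorphic to $\DDW_\gamma$.

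First, I would invoke the odd-dimensional surjectivity of $f\colon H^n(B\Z/2;\Z/2)\to H^n(B\Z/2;\R/\Z)$, already used in the proof of \cref{all_TDW_arise}, to pick $\beta\in H^n(B\Z/2;\Z/2)$ with $f(\beta)=\gamma$. By \cref{is_classical_DW}, this gives an isomorphism of classical theories $\Zcl_\beta\cong\DWcl_\gamma$. Second, I would appeal to the observation (stated just before \cref{Wuisom}) that the composition $\Sigma\circ F_{(-)}$ producing the quantum theory is functorial in the classical theory, so any isomorphism $\Zcl_\beta\cong\DWcl_\gamma$ passes to an isomorphism $Z_\beta\cong \DDW_\gamma$ of the quantum theories. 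Equivalently, one could cite \cref{quantumTDW}: since $\beta$ has no Stiefel-Whitney terms, $Z_\beta\cong\DDW_{f(\beta)}=\DDW_\gamma$, closing the argument.

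There is no real obstacle here; the work was already done in establishing \cref{all_TDW_arise} and \cref{quantumTDW}, and the corollary is a two-line composition of those two results. The only point requiring any care is the implicit identification of notation between $\DDW_\beta$ (as used in \cref{quantumTDW}) and $\DDW_{f(\beta)}$ (as demanded by $\R/\Z$ coefficients), which is handled by \cref{is_classical_DW} and the functoriality of quantization. Accordingly, the proof I would write is only a couple of sentences long, chaining \cref{all_TDW_arise} at the classical level through the quantization functor to the quantum level.
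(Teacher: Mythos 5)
Your proposal is correct and matches the paper's intended argument: the paper presents this as a formal corollary of \cref{all_TDW_arise} (equivalently, of the surjectivity of $f$ in odd degree together with \cref{is_classical_DW}), promoted to the quantum level by the observation that an isomorphism of classical theories induces an isomorphism of the associated quantum theories, which is exactly the chain you describe.
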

There is a new phenomenon at this level, however: one can produce $\beta$ and $\beta'$ whose quantum theories are
isomorphic, but whose classical theories are not.
\begin{defn}
Let $\beta\in H^n(B\O_n\times B\Z/2;\Z/2)$, so that there are coefficients $\gamma_1,\dotsc,\gamma_n\in
H^*(B\O_n;\Z/2)$ such that
\begin{equation}
	\beta = \gamma_n\alpha^n + \gamma_{n-1}\alpha^{n-1} + \dotsb + \gamma_1\alpha + \gamma_0,
\end{equation}
where $\alpha\in H^1(B\Z/2;\Z/2)$ is the generator. If $w_1\in H^1(B\O_n;\Z/2)$ denotes the first Stiefel-Whitney
class, we call
\begin{equation}
	\beta_{w_1}\coloneqq \gamma_n(\alpha +w_1)^n + \gamma_{n-1}(\alpha +w_1)^{n-1} + \dotsb + \gamma_1(\alpha +
	w_1) + \gamma_0 \in H^n(B\O_n\times B\Z/2;\Z/2)
\end{equation}
the \term{orientation-twisting} of $\beta$.
\end{defn}
\begin{prop}
\label{w_1isom}
Let $\beta_{w_1}$ be the orientation-twisting of $\beta$. Then, $Z_\beta\cong Z_{\beta_{w_1}}$.
\end{prop}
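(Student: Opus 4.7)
The plan is to build a self-equivalence $\Phi_M\colon \Bun_{\Z/2}(M)\to\Bun_{\Z/2}(M)$, natural in $M$, that intertwines the classical partition functions for $\beta$ and $\beta_{w_1}$, and then use the naturality of the finite path integral of \cref{constquant} to descend this to a quantum-level isomorphism. The key observation is that although $\Zcl_\beta$ and $\Zcl_{\beta_{w_1}}$ are generally non-isomorphic as classical theories, the $w_1(TM)$-twist can be absorbed at the quantum level into a reparametrization of the sum over principal $\Z/2$-bundles.

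Concretely, I would take $\Phi_M(P) \coloneqq P \otimes \det(TM)$, where $\det(TM)$ is viewed as a principal $\Z/2$-bundle via $O(1)\cong\Z/2$. Since tensoring with a fixed line bundle is an equivalence of groupoids, $\Phi_M$ is a self-equivalence (an involution up to natural isomorphism), and it acts on characteristic classes in $H^1(M;\Z/2)$ by $[P]\mapsto [P]+w_1(TM)$. Expanding $\beta = \sum_i \gamma_i \alpha^i$ as in the definition of $\beta_{w_1}$ and pulling back along a classifying map for $(TM,\Phi_M(P))$, one obtains
\begin{equation*}
\beta(M,\Phi_M(P)) = \sum_i \gamma_i(TM)\cdot([P]+w_1(TM))^i = \beta_{w_1}(M,P)
\end{equation*}
in $H^n(M;\Z/2)$. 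By \eqref{classpartfn}, this yields $\Zcl_\beta(M,\Phi_M(P)) = \Zcl_{\beta_{w_1}}(M,P)$ for every closed $(M,P)$, and Yonekura's uniqueness result (as in \cref{classpartdefnthm}) promotes this to an isomorphism $\Zcl_\beta\circ\Phi\cong \Zcl_{\beta_{w_1}}$ of classical theories on $\Bord_n(B\Z/2)$.

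The quantum statement then follows from \cref{quantchar}. On partition functions, $\Phi_M$ is a bijection on $\pi_0\Bun_{\Z/2}(M)$ with $\abs{\Aut(\Phi_M(P))} = \abs{\Aut(P)}$ (both equal $\abs{H^0(M;\Z/2)}$), so reindexing the sum in \cref{quantchar}\eqref{quantpartfn} by $P\mapsto\Phi_M(P)$ immediately identifies $Z_{\beta_{w_1}}(M)$ with $Z_\beta(M)$. For state spaces, a slightly more detailed check gives an isomorphism $\Phi_N^* L_\beta \cong L_{\beta_{w_1}}$: unwinding the mapping-torus definition of $L_\beta$ in \cref{quantchar}\eqref{quantstatespace}, the needed compatibility reduces to the identity $\det T(S^1\times N) \cong \pi^*\det TN$ (with $\pi\colon S^1\times N\to N$), which follows from $T(S^1\times N)\cong \pi^*TN\oplus\underline{\R}$. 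Since equivalences of indexing groupoids induce isomorphisms on colimits, this gives $\Gamma(L_\beta)\cong\Gamma(L_{\beta_{w_1}})$.

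The main obstacle I anticipate is packaging these manifold-by-manifold identifications into a single monoidal natural isomorphism of TQFTs, rather than a mere coincidence of invariants. This requires naturality of $\Phi$ along bordisms: for a bordism $W\colon M_1\to M_2$, $TW|_{M_i}\cong TM_i\oplus\underline{\R}$ gives $\det(TW)|_{M_i}\cong\det(TM_i)$, so $\Phi$ promotes to a self-equivalence of the functor $M\mapsto\Bun_{\Z/2}(M)$ on $\Bord_n$. The finite path integral of \cref{quantdefn} is manifestly invariant under precomposition with such a self-equivalence, and combining this with the classical isomorphism $\Zcl_{\beta_{w_1}}\cong\Zcl_\beta\circ\Phi$ yields the desired $Z_\beta\cong Z_{\beta_{w_1}}$.
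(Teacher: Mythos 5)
Your proof is correct and takes essentially the same approach as the paper: your $\det(TM)$ viewed as a principal $\Z/2$-bundle is exactly the orientation bundle $\fo_M$ used there, and the paper likewise tensors by it to shift $\alpha$ to $\alpha + w_1$, checks compatibility along bordisms via $(\fo_W)|_{N_i}=\fo_{N_i}$, and concludes because the quantization functor $\Sigma$ (a colimit) is unchanged by precomposition with this self-equivalence of the groupoids of bundles.
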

The idea is that replacing $\beta$ with $\beta_{w_1}$ corresponds to tensoring with the orientation bundle, an
involution on the space of fields. Since we are summing over the fields, this does not change the path integral.
\begin{defn}
We define a tensor product of principal $\Z/2$-bundles induced from the tensor product of real line bundles. Given
two principal $\Z/2$-bundles $P_1,P_2\to M$, define a real line bundle $L(P_i)\to M$ for $i = 1,2$ by
$L(P_i)\coloneqq P_i\times_{\Z/2}\underline\R$, where $\Z/2$ acts on $\underline\R$ as $\set{\pm 1}$. The Euclidean
metric on $\underline\R$ induces Euclidean metrics on $L(P_1)$ and $L(P_2)$, hence also on $L(P_1)\otimes L(P_2)$;
we define the \term{tensor product} of $P_1$ and $P_2$, denoted $P_1\otimes P_2\to M$, to be the unit sphere bundle
in $L(P_1)\otimes L(P_2)$, which is a principal $\Z/2$-bundle on $M$.
\end{defn}
The characteristic class of $P\otimes Q$ is $\alpha(P\otimes Q) = \alpha(P) + \alpha(Q)$.
%

On any manifold $M$, there is a canonical principal $\Z/2$-bundle $\fo_M$, called the \term{orientation bundle},
whose fiber at $x\in M$ is the $\Z/2$-torsor of orientations at $x$. Its characteristic class is $\alpha(\fo_M)
= w_1(M)$.
\begin{proof}[Proof of \cref{w_1isom}]
Let $\PM_n$ denote the subcategory of $\Gpd(\Vect_\C)$ whose objects are vector bundles over groupoids of the form
$\Bun_{\Z/2}(N)$ for some closed $(n-1)$-manifold $N$ and whose morphisms are induced from the spans
\begin{equation}
\label{pushpull}
\begin{gathered}\xymatrix@dr{
	\Bun_{\Z/2}(M)\ar[r]\ar[d] & \Bun_{\Z/2}(N_2),\\
	\Bun_{\Z/2}(N_1)
}\end{gathered}
\end{equation}
where $M$ is a bordism between $N_1$ and $N_2$. For any $\beta$, $F_{\Zcl_\beta}$ lands in $\PM_n$. To simplify
notation, we will let $F_\beta\coloneqq F_{\Zcl_\beta}$.

If $M$ is a bordism between $N_1$ and $N_2$, $(\fo_M)|_{N_i} = \fo_{N_i}$. Thus the automorphism
$\bl\otimes\fo_Y\colon\Bun_{\Z/2}(Y)\to\Bun_{\Z/2}(Y)$ induces an automorphism $\Phi\colon\PM_n\to\PM_n$ as
follows.
\begin{itemize}
	\item An object of $\PM_n$ is a functor $F\colon \Bun_{\Z/2}(N)\to\Vect_\C$ for some $(n-1)$-manifold $N$. Let
	$\Phi(F)$ be $F\circ(\bl\otimes\fo_N)\colon\Bun_{\Z/2}(N)\to\Bun_{\Z/2}(N)\to\Vect_\C$.
	\item A morphism $F_1\to F_2$ of $\PM_n$ is a push-pull map induced from a span as in~\eqref{pushpull}. Since
	$(\fo_M)|_{N_i} = \fo_{N_i}$, the arrows in~\eqref{pushpull} intertwine the actions of $\bl\otimes\fo_M$ and
	$\bl\otimes\fo_{N_i}$, so this span induces a morphism $\Phi(F_1)\to\Phi(F_2)$ as desired.
\end{itemize}
Thus we may consider the diagram
\begin{equation}
\begin{gathered}
\xymatrix{
	\Bord_n\ar[r]^-{F_\beta}\ar[dr]_-{F_{\beta_{w_1}}} &\PM_n\ar[r]^-\Sigma\ar[d]^\Phi & \Vect_\C\\
	& \PM_n\ar[ur]_-\Sigma,
}
\end{gathered}
\end{equation}
where the composition along the top is $Z_\beta$ and the composition along the bottom is $Z_{\beta_{w_1}}$.

It suffices to prove this diagram commutes up to natural isomorphism, which means checking its two triangles.
\begin{itemize}
	\item The left triangle commutes (up to natural isomorphism) by design, since $\alpha(P\otimes\fo_M) =
	\alpha(P) + w_1(M)$ and in $\beta_{w_1}$, we have replaced $\alpha$ with $\alpha + w_1$.
	\item The right triangle commutes because $\Sigma$ takes a diagram and evaluates its colimit, and an
	automorphism of the indexing category does not change the value of the colimit. Hence $\Sigma(S)$ and
	$(\Sigma\circ\Phi)(S)$ are isomorphic for any object $S$, and since $\Phi$ is compatible with morphisms in
	$\PM_n$, $\Sigma$ and $\Sigma\circ\Phi$ also agree on morphisms. \qedhere
\end{itemize}
\end{proof}

\begin{exm}
\label{notDWclassical}
The orientation twisting of $\alpha^2$ is $\alpha^2 + w_1^2$. The classical theories $\Zcl_{\alpha^2}$ and
$\Zcl_{\alpha^2 +w_1^2}$ are nonisomorphic; for example, they disagree on $\RP^2$ with the trivial principal
$\Z/2$-bundle. But by \cref{w_1isom}, their quantum theories are isomorphic.
\end{exm}
\begin{rem}
Lu-Vishwanath~\cite{LV16} observe a similar phenomenon in the physics of topological phases enriched by a global
$\Z/2$-symmetry, in which distinct phases become equivalent after gauging the $\Z/2$ symmetry.
\end{rem}

\section{Low-energy limits}
	\label{low-energy_limits}
	In this section, we return to the lattice, and investigate the spaces of ground states of the toric code and GDS
models on closed $(n-1)$-manifolds. In both cases, we find a TQFT $Z$ whose state space on $M$ is isomorphic to the
space of ground states of the lattice model on $M$.
\begin{defn}
\label{LEEFT}
Consider a lattice model which to all closed $(n-1)$-manifolds $M$ together with some kind of lattice $\Pi$ (e.g.\
a triangulation or a CW structure) associates a complex Hilbert space $\cH_{M,\Pi}$ and a self-adjoint operator
$H_{M,\Pi}\colon\cH_{M,\Pi}\to\cH_{M,\Pi}$ (respectively the state space and the Hamiltonian).  In this setting,
elements of $\ker(H_{M,\Pi})$ are called \term{ground states}.

Let $Z\colon\Bord_n\to\Vect_\C$ be a TQFT. We say that $Z$ \term{captures the ground states} of the lattice model
if for all closed $(n-1)$-manifolds $M$ with a lattice $\Pi$, $Z(M)\cong\ker(H_{M,\Pi})$.
\end{defn}
\begin{rem}
When $Z$ captures the ground states of a lattice model, it is believed to correspond to the physics notion of the
low-energy effective theory of the model. The existence of such a low-energy TQFT for certain lattice models,
called topological phases, is predicted by physics,\footnote{One should allow TQFTs tensored with an invertible,
non-topological theory, as in~\cite[\S5.4]{FreedHopkins}. The TQFTs we find in this paper are topological, so this
distinction will not matter here.} and the low-energy TQFT is expected to determine the lattice model up to some
physically meaningful notion of equivalence; this correspondence is discussed in~\cite{FreedHopkins, Gaiotto,
RowellWang, FT18}.

However, there is much left to understand, especially at a mathematical level of rigor. \Cref{LEEFT} is structured
to make \cref{TC_deriv,mainthm} easier to state; we do not intend for it to be a mathematical definition of the
physical notion of the low-energy effective theory of a lattice model. Providing such a mathematical definition is
a major open question; as is, \cref{LEEFT} fails to address uniqueness (as shown in \cref{another_LE}) and
existence (due to fracton phases; see, e.g.\ \cite{BLT11, Haa11, Yos13}).
\end{rem}

	\subsection{Review for the toric code}
		\label{LETC}
As a warmup, before tackling the GDS model, we determine a TQFT which captures the ground states of the toric code.
Neither the answer nor this perspective on it are new.
\begin{thm}
\label{TC_deriv}
Let $\DW\colon\Bord_n\to\Vect_\C$ denote the $\Z/2$-Dijkgraaf-Witten theory with Lagrangian equal to $0$. Then
$\DW$ captures the ground states of the toric code.
\end{thm}
\begin{rem}
This is not a new result. Because researchers consider different formulations of the toric code, there are some
analogues of \cref{TC_deriv} in the literature for different classes of toric code models, e.g.\ in~\cite{Kitaev,
BK12, Cha14}. Though these results do not cover \cref{TC_deriv} in the case $n > 3$, it and its proof were
certainly known before this paper.
\end{rem}
We can use the fact that the vertex and face operators commute to simplify our analysis of the Hamiltonian.
\begin{lem}
\label{A+Blinalg}
Let $V$ be a vector space over a field $k$, and let $\Phi = \sum_{i=1}^m \phi_i$ be a finite sum of commuting
projections $\phi_i\in\End_k(V)$. Then, $\ker(\Phi) = \bigcap_{i=1}^m \ker(\phi_i)$.
\end{lem}
\begin{proof}
By induction, it suffices to consider $m = 2$, so $\Phi = \phi_1 + \phi_2$. Clearly
$\ker(\phi_1)\cap\ker(\phi_2)\subset\ker(\Phi)$, so assume $\Phi x = 0$ for some $x\in V$. Thus $\phi_1x = -\phi_2
x$, so $\phi_1x = \phi_1^2x = -\phi_1\phi_2x = -\phi_2(\phi_1x)$, so $\phi_1x$ is an eigenvector for $\phi_2$ with
eigenvalue $-1$. This means $\phi_2^2(\phi_1x) = (-1)^2\phi_1x = \phi_1x$, and since $\phi_2$ is a projection,
$\phi_2^2(\phi_1x) = \phi_2\phi_1x = -\phi_1x$, forcing $\phi_1x = 0$. Since $\phi_2 = A - \phi_1$, then $\phi_2x =
0$ as well.
\end{proof}
Our proof of \cref{TC_deriv} will be slightly more complicated than necessary. This is so that it follows the same
line of argument as the proof for the GDS model in \S\ref{GDS_deriv}. We hope that presenting the simpler example
first makes the GDS example easier to understand.
\begin{proof}[Proof of \cref{TC_deriv}]
Let $M$ be a closed manifold with a CW structure $\Xi$. As before, we will write $(P,\xi)$ for an object of
$\Bun_{\Z/2}(M^1, M^0)$, meaning that $P\to M^1$ is a principal $\Z/2$-bundle and $\xi\colon M^0\to P|_{M^0}$ is a
trivialization of $P$ over $M^0$.

By \cref{A+Blinalg}, the ground states of the toric code for $M$ are those functions $\psi$ on $\Bun_{\Z/2}(M^1,
M^0)$ such that $H_v\psi = 0$ for all $0$-cells $v$ and $H_f\psi = 0$ for all $2$-cells $f$. 

Let $f$ be a $2$-cell. Then, $H_f\psi = 0$ if and only if $B_f\psi = \psi$, or for all $(P,\xi)\in\Bun_{\Z/2}(M^1,
M^0)$, $(-1)^{\Hol_P(f)}\psi(P,\xi) = \psi(P,\xi)$. That is, either $\psi(P,\xi) = 0$ or $\Hol_P(f) = 0$, so $\psi$
must vanish on all principal $\Z/2$-bundles with nontrivial holonomy around $\partial f$. Hence if
$\psi\in\ker(H_f)$ for all $2$-cells $f$, it can only be nonzero on the principal $\Z/2$-bundles with no holonomy
around the boundary of any $2$-cell, which are exactly the principal $\Z/2$-bundles which extend to $M^2$, hence to
all of $M$, and such an extension is necessarily unique. That is, $\bigcap_f \ker(H_f)$ is the space of functions
on $\Bun_{\Z/2}(M, M^0)$.

Let $\cA\coloneqq C_\Xi^0(M;\Z/2)$ denote the group of cellular $0$-cochains. We will describe the ground states of
the toric code for $M$ as invariant sections of an $\cA$-equivariant line bundle on $\Bun_{\Z/2}(M, M^0)$, then
take the quotient by $\cA$. For $v\in\Delta^{0}(M)$, let $\delta_v\in\cA$ be the function equal to $1$ on $v$ and
$0$ elsewhere. Then, $\cA$ has a presentation by the following generators and relations:
\begin{equation}
	\cA \cong \ang{\delta_v\text{ for all } v\in V\mid \delta_v^2, [\delta_v, \delta_w]},
\end{equation}
so an $\cA$-action is the same data as commuting involutions associated to each $\delta_v$. For example, $\cA$ acts
on the (discrete) groupoid $\Bun_{\Z/2}(M, M^0)$ through the commuting involutions
\begin{equation}
	\delta_v\colon (P,\xi)\mapsto (P, (w\mapsto \xi(w) + \delta_v(w))).
\end{equation}
Consider the trivial line bundle $\underline\C\to\Bun_{\Z/2}(M, M^0)$ and give it the trivial $\cA$-action. We can
identify sections of $\underline\C$ with functions on $\Bun_{\Z/2}(M, M^0)$, and the $\cA$-actions match; in
particular, if $\psi\in\Gamma(\underline\C)$ and $v$ is a $0$-cell, then $\delta_v\cdot \psi = A_v\psi$. Therefore
$\psi$ is invariant under the $\cA$-action if and only if $A_v\psi = \psi$ for all $v$, i.e.\ $H_v\psi = 0$ for all
$v$. That is, the space of ground states is the space of $\cA$-invariant sections of $\underline\C\to\Bun_{\Z/2}(M,
M^0)$.

The $\cA$-equivariant line bundle $\underline\C\to\Bun_{\Z/2}(M, M^0)$ descends to a nonequivariant line bundle on
the groupoid quotient $\Bun_{\Z/2}(M, M^0)/\cA$; since we began with the trivial $\cA$-action, this will also be
a trivial line bundle. Therefore it suffices to identify the quotient.
\begin{lem}
\label{quotientBun}
The map $\Bun_{\Z/2}(M, M^0)/\cA\to\Bun_{\Z/2}(M)$ which forgets the trivialization is an equivalence of groupoids.
Given $(P,\xi)\in\Bun_{\Z/2}(M, M^0)$ and $\phi\in\Aut(P)$, action by
\begin{equation}
        t_\phi\coloneqq \sum_{\substack{v\in \Delta^0(M)\\\phi|_v\text{\rm{} nontrivial}}} \delta_v \in\cA
\end{equation}
on $(P,\xi)$ passes to $\phi$ in the quotient.
\end{lem}
\begin{proof}
$\Bun_{\Z/2}(M, M^0)$ is a discrete groupoid, so we just have to determine the stabilizer subgroup for the
$\cA$-action. An automorphism $\phi$ of $P$ switches the trivializations wherever $\phi$ is nontrivial, so defines
an isomorphism $(P,\xi)\overset\cong\to (P, t_\phi\cdot \xi)$. To check these are the only isomorphisms that occur,
suppose $(P,\xi)\cong (P,t\cdot\xi)$ for some $t\in\cA$. Since the function $\spin_{(P,\xi)}$ is an isomorphism
invariant of $(P,\xi)\in\Bun_{\Z/2}(M, M^0)$, $t$ must be the sum of $\delta_v$ as $v$ ranges over a set $S$ of
$0$-cells such that every $1$-cell of $M$ bounds an even number of $0$-cells in $S$. Thus for any connected
component $M_0$ of $M$, $S$ includes either all $0$-cells of $M_0$ or none, so $t$ is realized by some $t_\phi$.
\end{proof}
Therefore the space of ground states on $M$ is the space of sections of $\underline\C\to\Bun_{\Z/2}(M)$, i.e.\ the
space of functions on $\Bun_{\Z/2}(M)$, which is what $\DW$ assigns to $M$.
\end{proof}

	\subsection{Derivation of the generalized double semion Lagrangian}
		\label{GDS_deriv}
We now answer the main question of this paper: identifying a TQFT whose state spaces are isomorphic to the spaces
of ground states of the GDS model.
\begin{defn}
Fix a dimension $n$ and let $\beta\in H^n(B\O_n\times B\Z/2;\Z/2)$ denote the degree-$n$ part of
$w\alpha/(1+\alpha)$, where $w$ is the total Stiefel-Whitney class and $\alpha$ is the generator of
$H^1(B\Z/2;\Z/2)$. We let $\DS\colon\Bord_n\to\Vect_\C$ denote the quantum gauge-gravity theory $Z_\beta$ from
\cref{quantdefn}; the dimension $n$ will be clear from context when needed.
\end{defn}
Our goal in this section is to prove the following.
\begin{thm}
\label{mainthm}
The TQFT $\DS$ captures the ground states of the GDS model.
\end{thm}
Let $M$ be a closed $(n-1)$-manifold with a smooth triangulation $\Pi$; as in \S\ref{origGDS}, we assume the
$0$-clopen star of any vertex is contractible. We will prove \cref{mainthm} by identifying the ground states of the
GDS model on $M$ with the space of sections of a line bundle $\LDS\to\Bun_{\Z/2}(M)$ defined below.
\Cref{quantchar} identifies $\DS(M)$ with the sections of another line bundle $L_\beta\to\Bun_{\Z/2}(M)$, and we
will show that $\LDS\cong L_\beta$.

\subsubsection{Defining $\LDS\to\Bun_{\Z/2}(M)$}
The commutativity relations for the operators in the GDS model are more complicated than those for the toric code,
but we can still understand the spaces of ground states in terms of the vertex and face operators.
\begin{lem}
\label{GDSrelns}
With $V$ as in \cref{A+Blinalg}, let $\phi_i, \psi_j\in\End_k(V)$ and suppose
\begin{equation}
	H = \underbrace{\sum_{i=1}^\ell \phi_i}_\Phi + \underbrace{\sum_{j=1}^m \psi_j}_\Psi,
\end{equation}
such that for all $i$ and $j$,
\begin{enumerate}
	\item $\phi_i$ and $\psi_j$ are projections,
	\item $[\phi_i,\phi_j] = 0$,
	\item $[\phi_i,\psi_j] = 0$,
	\item for any $x\in\ker(\Phi)$, $[\psi_i,\psi_j]x = 0$.
\end{enumerate}
Then,
\begin{equation}
	\ker(H) = \bigcap_{j=1}^m\ker(\psi_j\colon \ker(\Phi)\to\ker(\Phi)).
\end{equation}
\end{lem}
\begin{proof}
\Cref{A+Blinalg} tells us $\ker(H) = \ker(\Phi)\cap\ker(\Psi)$, so it suffices to restrict to $\ker(\Phi)$. Since
$\phi_i$ and $\psi_j$ commute, then $\psi_j(\ker \Phi)\subset\ker \Phi$ for each $j$, so we may consider $\psi_j$
as an operator on $\ker(\Phi)$. Restricted to this subspace, $[\psi_i,\psi_j] = 0$, so we apply \cref{A+Blinalg}
again to conclude.
\end{proof}
The upshot is that for a Hamiltonian whose smallest eigenvalue is $0$ and which is a sum of vertex and face
operators satisfying the commutativity conditions in \cref{GDSrelns}, the space of ground states can be computed by
finding the $f\in\cH$ with $\phi_if = 0$ for all $i$, then taking the subspace of those such that $\psi_jf = 0$ for
all $j$.

By \cref{commuting_descent,DS_proj}, the vertex and face operators for the GDS model satisfy the commutation
relations in \cref{GDSrelns}, where the $\phi_i$ are the face operators and the $\psi_j$ are the vertex operators,
so we will use this method to find the space of ground states.

The first part of the derivation is to determine $\bigcap_f \ker(H_f)$. The $H_f$ operators in the GDS model are
the same as in the toric code, so the derivation proceeds as for the toric code (the first part of the proof of
\cref{TC_deriv}) to produce the space of functions on $\Bun_{\Z/2}(M, M^0)$.

Next, we will use the vertex operators to define $\LDS\to\Bun_{\Z/2}(M)$ and characterize the ground states on $M$
as its space of sections. Specifically, letting $\cA\coloneqq C^0_\Pi(M;\Z/2)$ as in the previous section, we will
describe an $\cA$-equivariant line bundle on $\Bun_{\Z/2}(M, M^0)$ whose invariant sections are the ground states,
then let $\LDS\to\Bun_{\Z/2}(M)$ denote the induced bundle on the quotient.
\begin{defn}
First, we define the $\cA$-equivariant line bundle $\LDS'\to\Bun_{\Z/2}(M, M^0)$. Begin with the trivial
(nonequivariant) line bundle $\underline\C\to\Bun_{\Z/2}(M, M^0)$, and give it an $\cA$-action as follows: if
$(P,\xi)\in\Bun_{\Z/2}(M,M^0)$ and $z\in\C$, let
\begin{equation}
\label{GDSAact}
	\delta_v\colon ((P,\xi), z)\mapsto (\delta_v\cdot (P,\xi), \sigma(v, (P,\xi))z),
\end{equation}
where $\sigma(v,(P,\xi))$ is the GDS sign from~\eqref{GDS_sign}. By \cref{commuting_descent,DS_proj}, the actions
of $\delta_{v_1}$ and $\delta_{v_2}$ on $\underline\C$ commute for $0$-cells $v_1$ and $v_2$, so~\eqref{GDSAact}
defines an $\cA$-action covering the $\cA$-action on $\Bun_{\Z/2}(M, M^0)$.
\end{defn}
Identifying functions on $\Bun_{\Z/2}(M, M^0)$ with sections of the trivial line bundle, hence of
$\LDS'\to\Bun_{\Z/2}(M, M^0)$, a section $\psi$ is invariant under the $\cA$-action if and only if
$\psi\in\ker(\tH_v)$ for all $v\in\Delta^0(M)$; hence, by \cref{GDSrelns}, this identifies the ground states of the
GDS model for $M$ with the space $\Gamma(\LDS')^\cA$ of invariant sections of $\LDS'$. By \cref{quotientBun},
$\LDS'\to\Bun_{\Z/2}(M, M^0)$ descends to a (nonequivariant) line bundle $\LDS\to\Bun_{\Z/2}(M)$, and there is an
isomorphism $\Gamma(\LDS')^\cA\cong\Gamma(\LDS)$, so the space of ground states of the GDS model is isomorphic to
$\Gamma(\LDS)$.

\subsubsection{Computing the isomorphism type of $\LDS$}
Given a principal $\Z/2$-bundle $P\to M$, the action of $\Aut(P)$ on $(\LDS)_P$ is a character of $\Aut(P)$, and
the data of these characters for all $P\in\pi_0\Bun_{\Z/2}(M)$ determines $\LDS$ up to isomorphism. In this
section, we compute these characters, describing the answer in \cref{GDScharcomp}.

Let $P\to M$ be a principal $\Z/2$-bundle and $\phi\in\Aut(P)$. Let $\cV$ denote the set of vertices on which
$\phi$ is nontrivial, and order this set as $\set{v_1,\dotsc,v_m}$. Fix a trivialization $\xi_0$ of $P|_{M^0}$ and
let
\begin{equation}
\label{xiidefn}
	\xi_i\coloneqq \delta_{v_i}\cdot(\delta_{v_{i-1}}\cdot(\dotsb\cdot(\delta_{v_1}\cdot\xi_0)\dotsb)).
\end{equation}
In \cref{quotientBun}, we identified the action of $\phi$ on $\LDS$ with the action of $t_\phi$ on $\LDS'$, which
is multiplication by 
\begin{equation}
\label{sigcv}
\sigma_\cV\coloneqq \prod_{i=1}^m \sigma(v_i, (P,\xi_i)).
\end{equation}
To compare $\LDS$ and $L_\beta$, we need to pass from this description of $\sigma_\cV$ in terms of simplices to
a description only depending on $M$ and $P$. The following theorem makes this transition; afterwards we use
characteristic classes to finish the calculation.

As in \cref{quantchar}, let $P_\phi\to S^1\times M$ denote the mapping torus of $\phi$.
\begin{thm}
\label{signcalcPD}
Let $N\subset S^1\times M$ be an embedded submanifold representing the Poincaré dual to $\alpha(P_\phi)\in
H^1(S^1\times M;\Z/2)$. Then $\sigma_\cV = (-1)^{\chi(N)}$.
\end{thm}
Our proof has two parts.
\begin{enumerate}
	\item First, the simplicial part: we construct an $(n-1)$-cycle $C$ on $S^1\times M$, cellular with respect to
	a certain CW structure, which represents the Poincaré dual of $\alpha(P_\phi)$ (\cref{NPD}) and such that if
	$\abs C$ denotes the geometric realization of $C$, then $\sigma_\cV = (-1)^{\chi(\abs C)}$ (\cref{CEulercalc}).
	\item Then, we show that replacing $\abs C$ with a smoothly embedded representative of the homology class of
	$C$ does not change the mod 2 Euler characteristic (\cref{celltosmooth}).
\end{enumerate}
The proof employs the dual CW structure $\Pi^\vee$ to the given triangulation $\Pi$; see \cref{dualremark} for more
information. Let $S^1(m)$ denote the simplicial structure on $S^1$ with $m$ vertices, and choose an identification
of the vertices with $\Z/m$ such that $i$ and $i+1\bmod m$ share an edge for each $i$. Then let
$S^1(m)\times\Pi^\vee$ denote the product CW structure.

For any $i\in\Z/m$, the cellular $1$-cochain $\spin_{(P,\xi_i)}\colon\Delta^1(M;\Pi)\to\Z/2$ is a cocycle
representative for $\alpha(P)\in H^1(M;\Z/2)$, and therefore
\begin{equation}
	Y_i\coloneqq \set{e^\vee\mid e\in\Delta^1(M;\Pi)\text{ and } \spin_{(P,\xi_i)}(e) = 1}\subset
	\Delta^{n-2}(M;\Pi^\vee)
\end{equation}
is a cellular $(n-2)$-cycle representative for the Poincaré dual of $\alpha(P)$ in $H_{n-2}(M;\Z/2)$. From the
definitions of $Y_i$ and of $\xi_i$~\eqref{xiidefn} we see that
\begin{equation}
\label{Yii-1}
	Y_i = Y_{i-1} + \partial v_i^\vee,
\end{equation}
where $i-1$ is interpreted in $\Z/m$, and that
\begin{equation}
\label{bigCdefn}
	C\coloneqq \sum_{i\in\Z/m} \paren{(i, i+1)\times Y_i + \set i\times v_i^\vee} \subset\Delta^n(S^1\times M;
	S^1(m)\times\Pi^\vee)
\end{equation}
is a cellular $(n-1)$-cycle on $S^1\times M$.
\begin{defn}
\label{bktphi}
If $P\to M$ is a principal $\Z/2$-bundle over a closed manifold $M$, there is an isomorphism $\Aut(P)\to
H^0(M;\Z/2)$ sending $\phi\in\Aut(P)$ to the function on $\pi_0(M)$ which is $0$ on a connected component if $\phi$
is trivial there and $1$ if $\phi$ is nontrivial there. The image of $\phi\in\Aut(P)$ under this isomorphism is
denoted $[\phi]$.
\end{defn}
For example, if $x\in H^1(S^1;\Z/2)$ denotes the generator, then
\begin{equation}
\label{expand_phi}
	\alpha(P_\phi) = \alpha(P) + x[\phi] \in H^1(S^1\times M;\Z/2).
\end{equation}
\begin{lem}
\label{NPD}
The homology class $C$ represents is the Poincaré dual of $\alpha(P_\phi)\in H^1(S^1\times M;\Z/2)$.
\end{lem}
\begin{proof}
Recall that $Y_0\subset \Delta^{n-2}(M;\Pi^\vee)$ is a cellular $(n-2)$-cycle representing the Poincaré dual of
$\alpha(P)\in H^1(M;\Z/2)$. The $(n-1)$-cycle in $S^1\times M$ defined to be the set of $(n-1)$-cells of
\begin{equation}
	(S^1\times\abs{Y_0}) \cup \bigcup_{\substack{M_i\in\pi_0(M)\\{}[\phi](M_i) = 1}} \set 0\times M_i
\end{equation}
represents the Poincaré dual to $\alpha(P) + x[\phi] = \alpha(P_\phi)$~\eqref{expand_phi}, and is homologous to $C$
in $Z_{n-1}^{S^1(m)\times\Pi^\vee}(S^1\times M;\Z/2)$ by adding boundaries of the form $\partial((0,i)\times
v_i^\vee)$.
\end{proof}
\begin{lem}
\label{Y_iEulercalc}
For $1\le i\le m$, let $Z_{v_i}(P,\xi_i)$ be as in \cref{alternatesign}. Then $\#(\overline{Y_i}\cap\partial
v_i^\vee) = \#(Z_{v_i}(P,\xi_i))$ and therefore $(-1)^{1 + \chi(\abs{Y_i}\cap \partial v_i^\vee)} = \sigma(v_i,
(P,\xi_i))$.
\end{lem}
\begin{proof}
This is a matter of unwinding the definitions: $c\in\overline{Y_i}\cap\partial v_i^\vee$ means that $v_i\in\partial
c^\vee$ and either
\begin{enumerate}
	\item $c$ is an $(n-2)$-cell and $\spin_{(P, \xi_i)}(c^\vee) = 1$, or
	\item there is an $(n-2)$-cell $e\in Y_i$ with $c\in\partial e$, i.e.\ $\spin_{(P,\xi_i)}(e^\vee) = 1$ and
	$e^\vee\in\partial c^\vee$.
\end{enumerate}
These are exactly the conditions for $c^\vee$ to be in $Z_{v_i}(P, \xi_i)$, so $\#(\overline{Y_i}\cap\partial v_i)
= \#(Z_{v_i}(P,\xi_i))$, and the rest of the conclusion then follows from \cref{alternatesign}.
\end{proof}
\begin{prop}
\label{CEulercalc}
$(-1)^{\chi(\abs C)} = \sigma_\cV$.
\end{prop}
\begin{proof}
The projection map $\pi\colon S^1\times M\surj S^1$ is cellular with respect to $S^1(m)\times\Pi^\vee$ and
$S^1(m)$; if $D_i \coloneqq \abs C\cap\pi^{-1}([i, i+1))$, then each $D_i$ is a union of cells and
\begin{equation}
	\abs C = \coprod_{i\in\Z/m} D_i.
\end{equation}
Define $A_i$ and $B_i$ by $\pi^{-1}(\set i) = \set i\times A_i$ and $\pi^{-1}((i,i+1)) = (i,i+1)\times B_i$; $A_i$
and $B_i$ are also unions of cells. Then
\begin{subequations}
\begin{align}
	A_i &= \abs{Y_i}\cup\abs{Y_{i-1}}\cup\abs{v_i^\vee} = \abs{Y_i}\cup\abs{v_i^\vee}
\intertext{because $Y_{i-1} = Y_i + \partial v_i^\vee$~\eqref{Yii-1}, and}
	B_i &= \abs{Y_i}.
\end{align}
\end{subequations}
Therefore
\begin{equation}
\begin{aligned}
	\#(\text{cells of $D_i$}) &= \#(\text{cells of $A_i$}) + \#(\text{cells of $B_i$})\\
	&= \chi(\abs{Y_i}\cup\abs{v_i^\vee}) + \chi(\abs{Y_i})\\
	&= \chi(\abs{Y_i}\cup \operatorname{int}(\abs{v_i^\vee})\cup\abs{\partial v_i^\vee}) + \chi(\abs{Y_i})\\
	&= 1 + \chi(\abs{Y_i}) + \chi(\abs{\partial v_i^\vee}) - \chi(\abs{Y_i}\cap\abs{\partial v_i^\vee}) +
	\chi(\abs{Y_i})\\
	&\equiv_2 1 + \chi(\abs{Y_i}\cap\abs{\partial v_i^\vee}),
\end{aligned}
\end{equation}
since $\partial v_i^\vee\cong S^{n-1}$, which has even Euler characteristic. Looking at the definition of
$\sigma_\cV$ from \eqref{sigcv}, it suffices to equate $(-1)^{1+\chi(\abs{Y_i}\cap\partial v_i^\vee)}$ with
$\sigma(v_i, (P,\xi_i))$, which is taken care of by \cref{Y_iEulercalc}.
\end{proof}
Now we show that we can replace $\abs C$ with a smooth representative of the homology class of $C$.
\begin{defn}
Let $M$ be a smooth manifold and $r\in\Z_{\ge 0}\cup\set\infty$. A \term{$C^r$ triangulation} of $M$ is a
triangulation $(K, f\colon \abs K\to M)$ of $M$ such that for every simplex $e$ of $K$, $f|_{\abs e}$ is a $C^r$
map.
\end{defn}
\begin{thm}[{Munkres~\cite[Theorem 10.6]{Munkres}}]
\label{munkresthm}
Let $W$ be a compact manifold and $r\in\Z_{>0}\cup\set\infty$. Then every $C^r$ triangulation of $\partial W$
extends to a $C^r$ triangulation of $W$.
\end{thm}
\begin{cor}
\label{codim1iscell}
Let $X$ be a closed smooth manifold and $Y\subset X$ be a smooth codimension-one submanifold. Then there is a
triangulation of $X$ such that $Y$ is a union of simplices.
\end{cor}
\begin{proof}
Let $\nu\to Y$ denote the normal bundle of $Y\inj X$, $D(\nu)\to Y$ denote the unit disc bundle of $\nu$, and
$S(\nu)\to Y = \partial D(\nu)$ denote the unit sphere bundle of $\nu$. Using the tubular neighborhood theorem, we
choose an embedding $i\colon D(\nu)\inj M$ such that the original embedding of $Y$ in $X$ is the zero section of
$D(\nu)\to Y$ followed by $i$.

Let $r \ge 1$. Given a $C^r$ triangulation $\Pi(N)$ of $Y$, we can triangulate $D(\nu)$: let $\Pi(I)$ denote the
triangulation of $[-1,1]$ which has vertices precisely at the integers, which is a smooth triangulation. For any
simplex $e$ of $\Pi(Y)$, $D(\nu)|_{\abs e}$ is isomorphic to $\abs e\times [-1,1]$; choose an isomorphism $\psi_e$,
and give $D(\nu)_{\abs e}$ the product triangulation $\abs e\times\Pi(I)$. These are compatible as $e$ varies: if
$e'$ is another cell and $\abs{e'}$ intersects $\abs e$, $(\psi_{e'}^{-1}\circ\psi_e)|_{\abs e\cap\abs{e'}}$ is
either the identity or multiplication by $-1$ on the fiber. Both of these send simplices to simplices, so we can
glue the triangulations on $D(\nu)|_{\abs e}$ and $D(\nu)|_{\abs{e'}}$. Doing this for all simplices of $Y$ defines
a $C^r$ triangulation $\Pi(D(\nu))$ of $D(\nu)$ in which $Y\subset D(\nu)$ is a union of simplices.

This induces a $C^r$ triangulation of $S(\nu) = \partial(\overline{X\setminus D(\nu)})$, which by \cref{munkresthm}
extends to a triangulation of $\overline{X\setminus D(\nu)}$. We glue this triangulation to $\Pi(D(\nu))$, since
both triangulations agree on $S(\nu)$, to obtain a triangulation of $X$ in which $Y$ is a union of simplices.
\end{proof}
\begin{lem}
\label{changeface}
Let $\Pi$ be a triangulation of an $n$-manifold $X$, $C\in Z_{n-1}^\Pi(X;\Z/2)$, and $f\in\Delta^n(X)$. Then
\begin{equation}
	\chi(\abs C) \equiv \chi(\abs{C+\partial f})\bmod 2.
\end{equation}
\end{lem}
\begin{proof}
The sets of simplices in $\abs C$ and $\abs{C+\partial f}$ agree away from $\abs f$, so if $R_0\coloneqq \abs
C\cap\abs{\partial f}$ and $R_1\coloneqq \abs{C+\partial f}\cap\abs{\partial f}$, then it suffices to show
$\chi(R_0)\equiv\chi(R_1)\bmod 2$.

Inclusion-exclusion implies
\begin{equation}
\label{IEeuler}
	\chi(R_0) + \chi(R_1) \equiv \chi(\abs{\partial f}) + \chi(R_0\cap R_1)\bmod 2.
\end{equation}
Since $\abs{\partial f}\cong S^{d-1}$, its Euler characteristic is even. Next we show $R_0$ is a topological
manifold with boundary: if $R_0$ is empty or all of $\abs{\partial f}$, this is clear, and otherwise $R_0$ is an
iterated boundary connect sum of its $(n-1)$-simplices. Since $R_0\cap R_1 = \partial R_0$, $R_0\cap R_1$ is
null-bordant as a topological manifold, so its Euler characteristic is even, and~\eqref{IEeuler} simplifies to
$\chi(R_0) = \chi(R_1)\bmod 2$.
\end{proof}
\begin{prop}
\label{celltosmooth}
With $C$ as in~\eqref{bigCdefn}, if $N\inj S^1\times M$ is a smooth representative for the homology class of $C$
(namely, the Poincaré dual of $\alpha(P_\phi)$), then $\chi(\abs C) \equiv \chi(N)\bmod 2$.
\end{prop}
\begin{proof}
Let $\Pi_1$ be the barycentric subdivision of $\Pi$; as noted in \cref{dualremark}, this is also a ``refinement''
of $\Pi^\vee$, in that every cell of $\Pi^\vee$ is a union of simplices of $\Pi_1$. By \cref{codim1iscell}, there
is a triangulation $\Pi_t$ of $M$ such that $N$ is a union of simplices; let $\Pi'$ be a common refinement of
$\Pi_1$ and $\Pi_t$, and $S^1(m)\times\Pi'$ be the product triangulation of $S^1\times M$.

Let $\Ctop\in Z_{n-1}^{S^1(m)\times \Pi'}(S^1\times M;\Z/2)$ denote the cycle whose simplices are those contained
in the cells of $C$; then $\abs{\Ctop} = \abs C$. If $\Csm\in Z_{n-1}^{S^1(m)\times \Pi'}(S^1\times M;\Z/2)$
denotes the $(n-1)$-simplices in $N$, then $N = \abs{\Csm}$ and $\Ctop$ and $\Csm$ are homologous, so there are
$n$-cells $f_1,\dotsc,f_\ell$ such that
\begin{equation}
	\Csm = \Ctop + \sum_{i=1}^\ell \partial f_i.
\end{equation}
We apply \cref{changeface} $\ell$ times and conclude.
\end{proof}
By combining this with \cref{CEulercalc}, we have proven \cref{signcalcPD}.

Next, we translate $(-1)^{\chi(N)}$ into an expression involving characteristic classes of $M$ and $P$.
\begin{prop}
\label{charclass}
Let $M$ be a closed manifold, $P\to M$ be a principal $\Z/2$-bundle, and $N\subset M$ be a smoothly embedded,
codimension-$1$ submanifold representing the Poincaré dual to $\alpha(P)$. Then,
\begin{equation}
	\chi(N)\bmod 2 = \ang*{\frac{w(X)\alpha(P)}{1+\alpha(P)}, [X]}.
\end{equation}
\end{prop}
But before we prove this:
\begin{lem}
\label{no_longer_gysin}
Let $L\to X$ be a line bundle over a closed manifold $X$ and $Y\inj X$ be a smoothly embedded closed submanifold
representing the Poincaré dual to $w_1(L)$, with normal bundle $\nu\to Y$. Then, as line bundles over $Y$,
$\nu\cong L|_Y$.
\end{lem}
\begin{proof}
If $i_!\colon H^*(Y;\Z/2)\inj H^{*+1}(X;\Z/2)$ denotes the Gysin map (which is Poincaré dual to restriction
$H^*(X;\Z/2)\to H^*(Y;\Z/2)$), then $i_!(1)$ is Poincaré dual to $[Y]\in H_{d-1}(X;\Z/2)$ and $i^*i_!(1) =
w_1(\nu)$. By construction, $[Y]$ is Poincaré dual to $w_1(L)$, so $i^*w_1(L) = w_1(L|_Y) = w_1(\nu)$.  As line
bundles are classified by their Stiefel-Whitney classes, $\nu \cong L|_Y$.
\end{proof}
\begin{proof}[Proof of \cref{charclass}]
Let $j\colon N\inj M$ be inclusion. Since $N$ represents the Poincaré dual of $\alpha(P)$, then for any $x\in
H^{n-1}(M;\Z/2)$,
\begin{equation}
	\ang{j^*x, [N]} = \ang{\alpha(P)x, [M]}.
\end{equation}
We will use this to carry the mod 2 Euler characteristic of $N$, which is equal to $\ang{w(N), [N]}$, to the
cohomology of $M$; in order to do so, we must show $w(N)\in\Im(j^*)$.

If $\nu\to N$ denotes the normal bundle of $N$, there is a short exact sequence of vector bundles on $N$
\begin{equation}\xymatrix{
	0\ar[r] & TN\ar[r] & j^*TM\ar[r] & \nu\ar[r] & 0,
}\end{equation}
so $w(j^*TM) = j^*w(M) = w(N)w(\nu)$. Since $\nu$ is a line bundle,
\begin{equation}
	w(\nu) = 1 + w_1(\nu) = 1+j^*\alpha(P) = j^*(1+\alpha(P))
\end{equation}
by \cref{no_longer_gysin}. Hence
\begin{equation}
	j^*w(M) = w(N) j^*(1+\alpha(P)).
\end{equation}
Since $\alpha(P)\in H^*(X;\Z/2)$ is nilpotent, $j^*(1+\alpha(P))$ is invertible, and therefore
\begin{equation}
	w(N) = \frac{j^*w(M)}{j^*(1+\alpha(P))} = j^*\paren{\frac{w(M)}{1+\alpha(P)}}.
\end{equation}
Thus we can invoke Poincaré duality:
\begin{equation}
	\chi(N)\bmod 2 = \ang{w(N), [N]} = \ang*{\alpha(P)\cdot\frac{w(M)}{1+\alpha(P)}, [M]}. \qedhere
\end{equation}
\end{proof}
Combining this with \cref{signcalcPD}, we get:
\begin{cor}
\label{GDScharcomp}
If $P\in\Bun_{\Z/2}(M)$, the character of $\Aut(P)$ acting on $(\LDS)_P$ has $\phi$ act by multiplication by
\begin{equation}
\label{LDScharanswer}
	(-1)^{\ang{\alpha(P_\phi)w(S^1\times M)/(1+\alpha(P_\phi)), [S^1\times M]}}\in\set{\pm 1}\subset\C^\times.
\end{equation}
\end{cor}
Next, we compare this with the character of $\Aut(P)$ acting on $(L_\beta)_P$ and conclude.
\begin{proof}[Proof of \cref{mainthm}]
\Cref{quantchar} tells us that in the character of $\Aut(P)$ acting on $(L_\beta)_P$, $\phi$ acts by
$\Zcl_\beta(S^1\times M, P_\phi)$; by \cref{classpartdefnthm}, this is exactly~\eqref{LDScharanswer}. Hence
$\LDS\cong L_\beta$.
\end{proof}
\begin{rem}
\label{another_LE}
Suppose $n$ is even, and let $Z_2\colon\Bord_n\to\Vect_\C$ denote the quantum gauge-gravity TQFT with Lagrangian
$\beta_2\coloneqq w\alpha/(1+\alpha^2)$. Then $\DS(\RP^n) = 1$ and $Z_2(\RP^n) = 0$, so $\DS\ne Z_2$. However, a
characteristic-class computation shows that for any closed $(n-1)$-manifold $M$, there is an isomorphism
$\DS(M)\cong Z_2(M)$ equivariant with respect to the natural $\MCG(M)$-actions on the state spaces.\footnote{We
will recall the definition of these $\MCG(M)$-actions in \S\ref{TQFTMCG}.} This means that in the sense of
\cref{LEEFT}, both $\DS$ and $Z_2$ capture the ground states of the GDS model, and that it is not clear how to
distinguish them using data from the lattice. In physics, however, the low-energy effective theory of a lattice
model is expected to be unique.

Freed-Hopkins~\cite[\S7.3]{FreedHopkins}, following Kong-Wen~\cite{KongWen}, suggest that the low-energy effective
theory may only be defined on manifolds which locally have a direction of time, i.e.\ manifolds $M$ together with a
reduction of the structure group of $TM$ from $\O_n$ to $\O_{n-1}$. That is, it should be possible to calculate the
partition function on such manifolds using locality of the lattice model, and it might not be possible to calculate
further in general. Alternatively, Shapourian-Shiozaki-Ryu~\cite{SSR17} describe a method to compute partition
functions on $\RP^2$ for 2D SPTs defined by a Hamiltonian, and it is possible their method would generalize, though
we have not pursued this.
\end{rem}

	\subsection{Mapping class group representations}
		\label{mcg}
If $Z\colon\Bord_n\to\Vect_\C$ is an $n$-dimensional TQFT and $M$ is a closed $(n-1)$-manifold, the mapping class
group of $M$ naturally acts on $Z(M)$, as we describe below. We will define a similar $\Diff(M)$-action on the
ground states of the GDS model on $M$ and show that the isomorphism $\DS(M)\cong L(M)$ is an isomorphism of
$\Diff(M)$-representations.  Since the former representation is trivial when restricted to the connected component
of the identity in $\Diff(M)$, the latter is too; thus the action on the ground states of the GDS model is also an
$\MCG(M)$-action.

\subsubsection{The mapping class group action for TQFTs}
\label{TQFTMCG}
Let $\Diff(M)$ denote the diffeomorphism group of $M$ and $\Diff_0(M)$ denote the connected component of the
identity, so that $\MCG(M) = \Diff(M)/\Diff_0(M)$. For any $\vp\in\Diff(M)$, let $C_\vp$ denote the \term{mapping
cylinder} of $\vp$, i.e.\ the cobordism $[0,1]\times M$ from $M$ to itself, where $M$ is attached via the identity
at $0$ and via $\vp$ at $1$.

If $Z\colon\Bord_n\to\Vect_\C$ is a TQFT, then the assignment $\vp\mapsto Z(C_{\vp})\colon Z(M)\to Z(M)$ defines
an action of $\Diff(M)$ on $Z(M)$. If $\vp\in\Diff_0(M)$, then there is a smooth isotopy $\vp_t\colon [0,1]\times
M\to M$ such that $\vp_t(0,x) = x$ and $\vp_t(1, x) = \vp(x)$, and in particular there is a diffeomorphism of
cobordisms $C_\id\cong C_\vp$ defined by the map
\begin{equation}
\begin{aligned}
	[0,1]\times M &\to [0,1]\times M\\
	(t, x)		  &\mapsto (t, \vp_t(x)).
\end{aligned}
\end{equation}
Therefore $Z(C_\vp) = Z(C_\id) = \id$, so this $\Diff(M)$-action factors through $\Diff_0(M)$ to an
$\MCG(M)$-action on $Z(M)$.

\subsubsection{The $\Diff(M)$-action for a lattice model}
\label{genMCG}
We will imitate the first half of the above argument for a lattice model with some assumptions, constructing a
$\Diff(M)$-action on the space of ground states of the model on $M$; in \S\S\ref{MCG_TC} and \ref{MCG_GDS}, we will
see these factor through $\Diff_0(M)$ and define actions of the mapping class group on the spaces of ground states
of the toric code and GDS models.

We require the following of our lattice model.
\begin{enumerate}[label={(A\arabic*)}, ref={A\arabic*}]
	\item\label{latcond} The model is defined for closed $(n-1)$-manifolds equipped with a lattice, which here
	means a CW structure or a triangulation, or one of these structures subject to some condition that can be
	satisfied on all closed $(n-1)$-manifolds and for which any two such structures on a manifold admit a common
	refinement.
	\item\label{dataisom} Data of, for every refinement $\Pi\to\Pi'$ of lattices, an isomorphism from the space of
	low-energy states of the model on $\Pi$ to the space of low-energy states of the model on $\Pi'$, which is
	functorial under composition of refinements.
\end{enumerate}
Examples of conditions satisfying the constraint in~\eqref{latcond} include regular CW complexes and the class of
smooth triangulations we considered when defining the GDS model.

With these assumptions in place, let $\Lat(M)$ denote the poset category of lattices on a closed manifold $M$,
where morphisms are refinements. Then~\eqref{dataisom} defines a functor $L\colon\Lat(M)\to\Vect_\C$; let
$Z(M)\coloneqq\varinjlim L$. For every lattice $\Pi$ on $M$ there is a canonical isomorphism
$c_\Pi\colon L(\Pi)\overset\cong\to Z(M)$. We will construct a $\Diff(M)$-action on $Z(M)$.

Given a lattice $\Pi$ and $f\in\Diff(M)$, we get a new lattice $f(\Pi)$ by postcomposing the attaching maps in
$\Pi$ with $f$, and this defines a $\Diff(M)$-action on $\Lat(M)$, i.e.\ a functor $\pt/\Diff(M)\to \Lat(M)$. We
think of $L$ as a vector bundle over the category $\Lat(M)$ and $Z(M)$ as its space of sections. The fibers of this
vector bundle over two lattices $\Pi$ and $\Pi'$ are canonically identified by $c_{\Pi'}^{-1}\circ c_\Pi$;
therefore we can lift the $\Diff(M)$-action on the base to make $L$ into a $\Diff(M)$-equivariant vector bundle in
the trivial way: given $x\in L(\Pi)$ and $f\in\Diff(M)$, $f(x) \coloneqq c_{f(\Pi)}^{-1}\circ c_{\Pi}(x)$. The
space of sections of an equivariant vector bundle has an induced action: explicitly, given $x\in Z(M)$ and an
$f\in\Diff(M)$, choose a lattice $\Pi$; then $f(x) = c_{f(\Pi)}(c_{\Pi}^{-1}(x))$, and this does not depend on the
choice of $\Pi$.


\subsubsection{The $\MCG(M)$-action for the toric code}
\label{MCG_TC}
Consider the $n$-dimensional toric code, which is formulated on closed $(n-1)$-manifolds with a CW structure. A
refinement $\vp\colon\Xi\to\Xi'$ of CW structures on $M$ induces a pullback map
\begin{equation}
	\vp^*\colon\Bun_{\Z/2}(M^1_{\Xi'}, M^0_{\Xi'})\to \Bun_{\Z/2}(M^1_\Xi, M^0_\Xi).
\end{equation}
hence a pushforward map on state spaces: $\vp_*\colon\cH(\Xi)\to\cH(\Xi')$.
\begin{rem}
\label{naiveTC}
The pushforward $\vp_*$ does not restrict to an isomorphism on the spaces of ground states. Consider the refinement
$\Xi\to\Xi'$ in \cref{noground} and $(P,\xi)$ which induce the indicated spins on the $1$-cells of $\Xi'$. If $f$
is a ground state for $\Xi'$, it must vanish on $(P,\xi)$, because $(P,\xi)$ has nontrivial holonomy around the
boundaries of the pictured $2$-cells, but pulled back to $\Xi$, this is no longer the case. Therefore $\Im(\vp_*)$
contains states which do not vanish on $(P,\xi)$, hence are not ground states.
\end{rem}
The issue is that functions in the image of $\vp_*$ may not vanish on bundles with nontrivial holonomy around
certain boundaries of $2$-cells, so in order to satisfy~\eqref{dataisom}, we zero out their values on any such
bundle. Let $\cP\colon\cH_{\Xi'}\to\cH_{\Xi'}$ denote this projection: that is, if $f\in\cH_{\Xi'}$ and $(P,
\xi)\in\Bun_{\Z/2}(M^1_{\Xi'}, M^0_{\Xi'})$, let
\begin{equation}
\label{rightprojector}
	(\cP f)(P, \xi) \coloneqq \begin{cases}
		f(P, \xi), &\text{if $\Hol_P(e) = 0$ for all $e\in\Delta^2(M;\Xi')$},\\
		0, &\text{otherwise.}
	\end{cases}
\end{equation}

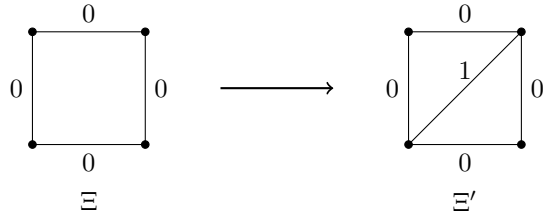
\begin{figure}[h!]
\begin{tikzpicture}
\coordinate (r0) at (0,0);
\coordinate (r3) at (0,1.5);
\coordinate (r1) at (1.5,0);
\coordinate (r2) at (1.5,1.5);

\coordinate (s0) at (5,0);
\coordinate (s3) at (6.5,0);
\coordinate (s1) at (5,1.5);
\coordinate (s2) at (6.5,1.5);

\draw[->, thick] (2.5, 0.75) -- (4, 0.75);

\draw (r0) -- node[below]{$0$} (r1) -- node[right]{$0$} (r2) -- node[above]{$0$} (r3) -- node[left]{$0$} (r0);
\draw (s0) -- node[left]{$0$} (s1) -- node[above]{$0$} (s2) -- node[right]{$0$} (s3) -- node[below]{$0$} (s0);
\draw (s0) -- node[above]{$1$} (s2);

\tikzpt{(r0)};
\tikzpt{(r1)};
\tikzpt{(r2)};
\tikzpt{(r3)};

\tikzpt{(s0)};
\tikzpt{(s1)};
\tikzpt{(s2)};
\tikzpt{(s3)};

\node[below] at (0.75, -0.5) {$\Xi$};
\node[below] at (5.75, -0.5) {$\Xi'$};
\end{tikzpicture}
\caption{Consider a refinement $\Xi\to\Xi'$ of CW structures as above, together with a
$(P,\xi)\in\Bun_{\Z/2}(M^1_{\Xi'}, M^0_{\Xi'})$ such that the labels on the $1$-simplices represent
$\spin_{(P,\xi)}$, as in \cref{noCW}. In \cref{naiveTC}, we discuss how $(P,\xi)$ illustrates a subtlety in
defining the map from the ground states of the toric code for $\Xi$ to those on $\Xi'$.}
\label{noground}
\end{figure}

\begin{lem}
\label{TCgroundground}
The map $\cP\circ \vp_*$ sends ground states to ground states, hence restricts to an isomorphism
$L(\Xi)\overset\cong\to L(\Xi')$ functorial in the sense of~\eqref{dataisom}.
\end{lem}
\begin{proof}
Let $f\in L(\Xi)$. By construction $\cP(\vp_*(f))$ vanishes on principal $\Z/2$-bundles with nontrivial holonomy,
so it suffices to check that it does not depend on the trivializations on the $0$-cells. This is not changed by
$\cP$, so we can just think about $\vp_*(f)$. Let $v\in\Delta^0(M, \Xi')$ and suppose $v$ is also a $0$-cell of
$\Xi$. Then $\vp_*(f)$ cannot depend on the trivialization at $v$, because $f$ does not depend on the
trivialization at $v$. If instead $v$ is not a $0$-cell of $\Xi$, so is created by the refinement, then $\vp_*(f)$
also does not depend on the trivialization at $v$, because $\vp_*(f)(P,\xi)$ is computed by pulling back to $\Xi$,
where $v$ is not a cell.
\end{proof}

Therefore the argument of \S\ref{genMCG} applies to define for any closed $(n-1)$-manifold $M$ an action of
$\Diff(M)$ on the ground states of the toric code. Under the identification $Z(M)\cong \C[\Bun_{\Z/2}(M)]$, this
representation is the one induced from the usual $\Diff(M)$-action on $\pi_0\Bun_{\Z/2}(M)\cong H^1(M;\Z/2)$, which
factors through $\Diff_0(M)$ to define an action of the mapping class group.

Recall from \S\ref{LETC} that the $\Z/2$-Dijkgraaf-Witten theory with Lagrangian equal to $0$, denoted $\DW$,
captures the ground states of the toric code. This theory assigns to a closed $(n-1)$-manifold $M$ the vector space
$\DW(M)\cong\C[\Bun_{\Z/2}(M)]$, and to a cobordism assigns a push-pull map, which implies that the
$\MCG(M)$-action on $\DW(M)$ is also the action induced from the standard action on $\pi_0\Bun_{\Z/2}(M)$.
Therefore we have proved the following extension of \cref{TC_deriv}.
\begin{thm}
The identification of the space of ground states of the toric code for $M$ with $\DW(M)$ in \cref{TC_deriv} is
equivariant with respect to the $\MCG(M)$-actions on both sides.
\end{thm}
The mapping class group action determines the partition functions of mapping tori: if $f\in\MCG(M)$, then $Z(M_f)$
is the trace of $f$ acting on $Z(M)$. Though we can see these partition functions from the lattice, it is not clear
in general how to extend this to arbitrary closed $n$-manifolds.

\subsubsection{The $\MCG(M)$-action for the GDS model}
\label{MCG_GDS}
Let $\Cell(M)$ denote the poset category whose objects are smooth triangulations on $M$ such that the 0-clopen star
of every vertex is contractible, and whose morphisms are refinements, and let $\vp\colon \Pi\to\Pi'$ be such a
refinement. Define $\vp_*$ and $\cP$ as in the previous section, and let $\cP'\colon\cH_{\Pi'}\to\cH_{\Pi'}$ be the
projection onto $\bigcap_v \tH_v$ which is orthogonal with respect to the inner product in which the
$\delta$-functions on elements of $\pi_0\Bun_{\Z/2}(M^1, M^0)$ are an orthonormal basis.
\begin{lem}
The map $\cP\circ\cP'\circ \vp_*$ sends ground states to ground states, hence restricts to an isomorphism
$L(\Pi)\overset\cong\to L(\Pi')$ functorial as in~\eqref{dataisom}.
\end{lem}
\begin{proof}
Suppose $\vp$ adds no $0$-simplices and $1$-simplices to $\Pi$, so $\cH_{\Pi'}\cong\cH_{\Pi'}$ and $\vp_*$ is the
identity. Then $\vp$ adds no cells at all, because it is not possible to add cells to a manifold that is a
simplicial complex without adding $0$- or $1$-simplices, so $\vp$ is the identity refinement and the lemma follows
because $\cP$ and $\cP'$ are projections.

If otherwise, we show that $\vp_*$ of a nonzero ground state is not a ground state, so that the orthogonal
projection thereafter sends it to a nonzero ground state. If $\vp$ adds any $1$-simplices to $\Pi$ that do not
arise from splitting preexisting $1$-simplices into smaller ones, the construction in \cref{naiveTC} shows that
$\vp_*$ of a nonzero ground state is not a ground state; if the only $1$-simplices it adds are split from
preexisting ones, then it must add a $0$-simplex. If $\vp$ adds any $0$-simplices to $\Pi$, it must add a
$1$-simplex that is not split from a preexisting $1$-simplex, because all $0$-simplices must be trivalent.
\end{proof}
Therefore the argument of \S\ref{genMCG} applies to define for any closed $(n-1)$-manifold $M$ an action of
$\Diff(M)$ on the ground states of the GDS model. Under the identification of $Z(M)$ with the space of functions on
the set of $P\in \pi_0\Bun_{\Z/2}(M)$ such that $\ang{\alpha(P)w(M)/(1+\alpha(P)), [M]} = 0$, this representation
is the one induced from the usual $\Diff(M)$-action on this space, which is an invariant subspace of
$\C[\Bun_{\Z/2}(M)]$, just as in the previous section, and once again this factors through $\Diff_0(M)$ to define
an $\MCG(M)$-action.

Recall from \S\ref{LETC} that $\DS$ captures the ground states of the GDS model; using the push-pull map $\DS$
assigns to a cobordism, its $\MCG(M)$-action is the same, again induced from the standard action on
$\pi_0\Bun_{\Z/2}(M)$. Therefore \cref{mainthm} strengthens to the following statement.
\begin{thm}
The identification of the space of ground states of the GDS model for $M$ with $\DS(M)$ in \cref{mainthm} is
equivariant with respect to the $\MCG(M)$-actions on both sides.
\end{thm}
Again, this means we can see the partition functions of mapping tori from the lattice, but not of other
closed $n$-manifolds.

\section{Calculations}
	\label{calcs}
	In this section, we perform some calculations with the GDS Lagrangian in order to understand when $\DS$ is
isomorphic to a $\Z/2$-Dijkgraaf-Witten theory. First, we fix some notation.
\begin{itemize}
	\item Recall that $\alpha$ denotes the generator of $H^1(B\Z/2;\Z/2)\cong\Z/2$; in particular, it defines a
	characteristic class for principal $\Z/2$-bundles by pullback, and if $P\in\Bun_{\Z/2}(X)$, this characteristic
	class evaluated on $P$ is denoted $\alpha(P)\in H^1(X;\Z/2)$.
	\item $\DW\colon\Bord_n\to\Vect_\C$ denotes $\Z/2$-Dijkgraaf-Witten theory with the zero Lagrangian and
	$Z_{\alpha^n}\colon\Bord_n\to\Vect_\C$ denotes $\Z/2$-Dijkgraaf-Witten theory with Lagrangian $\alpha^n\in
	H^n(B\Z/2;\Z/2)$.
	\item Recall from \cref{bktphi} that if $P\to M$ is a principal $\Z/2$-bundle, the image of $\phi\in\Aut(P)$
	under the isomorphism $\Aut(P)\to H^0(M;\Z/2)$ is denoted $[\phi]$. Letting $x\in H^1(S^1;\Z/2)$ denote the
	generator, $\alpha(P_\phi) = \alpha(P) + x[\phi]$ in $H^*(S^1\times M;\Z/2)$.
\end{itemize}

We begin with a few example calculations. We will call a principal $\Z/2$-bundle $P\to M$ \term{permitted} if the
GDS action $\ang{w(M)\alpha(P_\phi)/(1+\alpha(P_\phi)), [M]}$ vanishes for all $\phi\in\Aut(P)$; thus $\DS(M)$
is the space of functions on the set of isomorphism classes of permitted bundles.
\begin{prop}
\label{eulerchartest}
If $M$ is a closed $(n-1)$-manifold, then the trivial bundle $\Ptriv\to M$ is permitted if and only if $\chi(M)$ is
even.
\end{prop}
\begin{proof}
The action for $\Ptriv$ and $\phi\in\Aut(\Ptriv)$ is
\begin{align}
	\ang*{\frac{w(M)\alpha((\Ptriv)_\phi)}{1 + \alpha((\Ptriv)_\phi)}, [S^1\times M]} &= \ang*{\frac{w(M)(x[\phi] +
	\alpha(\Ptriv))}{1 + (x[\phi] + \alpha(\Ptriv))}, [S^1\times M]}
	\intertext{by~\eqref{expand_phi}. Since $\Ptriv$ is trivial, $\alpha(\Ptriv) = 0$, so}
	&= \ang*{\frac{w(M)x[\phi]}{1 + x[\phi]}, [S^1\times M]}.
	\intertext{Since $(x[\phi])^2\in H^2(S^1;\Z/2) = 0$,}
	&= \ang{w(M)x[\phi], [S^1\times M]},
	\intertext{which by a Fubini theorem is}
	&= \ang{x[\phi], [S^1]}\ang{w(M), [M]}.
	\label{clact}
\end{align}
If $\phi$ is nontrivial, $\ang{x[\phi], [S^1]} = 1$. Hence the action is zero for all $\phi$ if and only if
$\ang{w(M), [M]}$, which is $\chi(M)$ mod 2, vanishes.
\end{proof}
\begin{cor}
\label{simpconn}
Let $M$ be simply connected. Then,
\begin{equation}
\begin{gathered}
\DS(M) \cong \begin{cases}
	0, &\chi(M)\text{\rm{} odd}\\
	\C, &\chi(M)\text{\rm{} even.}
\end{cases}
\end{gathered}
\end{equation}
\end{cor}
\begin{proof}
All principal $\Z/2$-bundles over such a manifold are trivial, so we just have to check whether the trivial bundle
is permitted.
\end{proof}
It is worth comparing this to the $\alpha^n$ Dijkgraaf-Witten theory.
\begin{lem}
\label{TDWtrivsect}
If $n > 1$ and $M$ is a closed $(n-1)$-manifold, $\Zcl_{\alpha^n}(S^1\times M, (\Ptriv)_\phi) = 0$ for any
automorphism $\phi$. In particular, if $M$ is simply connected, $Z_{\alpha^n}(M)\cong\C$.
\end{lem}
\begin{proof}
Let $\phi\in\Aut(\Ptriv)$, so
\begin{equation}
	\alpha((\Ptriv)_\phi) = \alpha(\Ptriv) + x[\phi] = x[\phi].
\end{equation}
The action is
\begin{equation}
	\ang{\alpha(P_\phi)^n, [S^1\times M]} = \ang{(x[\phi])^n, [S^1\times M]} = 0.\qedhere
\end{equation}
\end{proof}
\begin{prop}
\label{cprp}
\begin{equation}
\begin{gathered}
\DS(\CP^n\times\RP^2) \cong \begin{cases}
	\C, & n\text{\rm{} even}\\
	\C^2, &n\text{\rm{} odd.}
\end{cases}
\end{gathered}
\end{equation}
\end{prop}
\begin{proof}
Let $X \coloneqq \CP^n\times\RP^2$, and let $z$ be the generator of $H^1(X;\Z/2)\cong\Z/2$. Since
\begin{equation}
\chi(X) = \chi(\CP^n)\chi(\RP^2) = \begin{cases}
	0\bmod 2, &n\text{ odd}\\
	1\bmod 2, &n\text{ even,}
\end{cases}
\end{equation}
then by \cref{eulerchartest}, the trivial bundle is permitted if and only if $n$ is odd.

The other isomorphism class of principal $\Z/2$-bundles on $X$ is the one whose total space is the universal cover
of $X$, which we denote $P$. Then $\alpha(P) = z$, and for $\phi\in\Aut(P)$, the Lagrangian for $S^1\times X$ and $P_\phi$ is
\begin{align}
	\frac{\alpha(P_\phi) w(S^1\times X)}{1 + \alpha(P_\phi)} &= \frac{(z + x[\phi])w(\RP^2)w(\CP^n)}{1+ z +
	x[\phi]}.
	\intertext{Since $z + x[\phi]$ is nilpotent, $1+z+x[\phi]$ is invertible, so}
	&= \frac{(z+x[\phi])w(\RP^2)w(\CP^n)(1+z+x[\phi])}{(1+z+x[\phi])^2}.
	\intertext{Since $(x[\phi])^2 = 0$,}
	&= \frac{(1+z)^3(z+z^2+x[\phi]) w(\CP^n)}{1+z^2}\\
	&= (1+z)(z+z^2+x[\phi])w(\CP^n)\\
\label{nofubini}
	&= (z + x[\phi] + zx[\phi])w(\CP^n).
\end{align}
We want to pair this with $[S^1\times X]$, but~\eqref{nofubini} has no terms of degree $\dim(S^1\times X) = 2n+3$.
Thus
\begin{equation}
	\ang{(z+x[\phi] + zx[\phi])w(\CP^n), [S^1\times X]} = 0,
\end{equation}
so this bundle is always permitted.
\end{proof}
\begin{prop}
\label{RPnstates}
For $n\ge 2$,
\begin{equation}
\DS(\RP^n) \cong\begin{cases}
	\C, &n\text{\rm{} even}\\
	\C^2, &n\text{\rm{} odd.}
\end{cases}
\end{equation}
\end{prop}
\begin{proof}
Let $z\in H^1(\RP^n;\Z/2)$ denote the generator. By \cref{eulerchartest}, the trivial principal $\Z/2$-bundle is
permitted if and only if $n$ is odd. The other isomorphism class of principal $\Z/2$-bundles is the universal cover
$S^n\surj\RP^n$, with $\alpha(S^n) = z$, so it suffices to prove this bundle is always permitted. Let $\phi$ be an
automorphism of this principal bundle. The action is
\begin{align}
	\frac{\alpha(S^n_\phi)w(\RP^n)}{1+\alpha(S^n_\phi)} &= \frac{(z+x[\phi])(1+z)^{n+1}}{1+z+x[\phi]}.
	\intertext{Again, $z+x[\phi]$ is nilpotent, so $1+z+x[\phi]$ is invertible, so}
	&= \frac{(z+x[\phi])(1+z)^{n+1}(1+z+x[\phi])}{(1+z+x[\phi])^2}\\
	&= \frac{(1+z)^{n+1}(z+z^2+x[\phi])}{(1+z)^2}\\
\label{xphi}
	&= (1+z)^{n-1}(z+z^2+x[\phi]).
\end{align}
But in~\eqref{xphi}, only the $(1+z)^{n-1}z^2$ term contributes anything of degree $\dim(S^1\times\RP^n) = n+1$,
and this lives in $H^{n+1}(\RP^n;\Z/2)\otimes H^0(S^1;\Z/2)$, hence must be $0$. Thus~\eqref{xphi} has no terms of
top degree, so
\begin{equation}
\ang{(1+z)^{n+1}(z+z^2+x[\phi]), [S^1\times\RP^n]} = 0,
\end{equation}
and this bundle is always permitted.
\end{proof}

We now compare $\DS$ with $\Z/2$-Dijkgraaf-Witten theories.
\begin{lem}
\label{CCL1}
Let $M$ be a closed $(2k+1)$-manifold and $y\in H^1(M;\Z/2)$. Then $w_1(M)y^{2k} = 0$.
\end{lem}
\begin{proof}
Let $v_1$ denote the first Wu class. Then,
\begin{equation}
	w_1y^{2k} = v_1y^{2k} = \Sq^1((y^k)^2) = 0. \qedhere
\end{equation}
\end{proof}
\begin{thm}
\label{dim3isom}
In dimension $3$, $\DS$ is isomorphic to $Z_{\alpha^3}$.
\end{thm}
\begin{proof}
This follows from \cref{w_1isom} after observing
\begin{equation}
\label{3dTDW}
	(\alpha+w_1)^3 = \alpha^3 + w_1\alpha^2 + w_1^2\alpha + w_1^3.
\end{equation}
On any closed 3-manifold, $w_1^3 = 0$ because all closed 3-manifolds bound, and $w_1\alpha^2 =0$ by \cref{CCL1}.
Thus~\eqref{3dTDW} agrees with the Lagrangian for $\DS$.
\end{proof}
The relationship in dimension 3 between the double semion model and the $\Z/2$-Dijkgraaf-Witten theory with
Lagrangian $\alpha^3$ is known to physicists (see, e.g.,~\cite[\S II]{WW15}), though not previoiusly proven in this
form.


\begin{thm}
\label{isDW}
For even $n$, $\DS$ is isomorphic to $\DW$.
\end{thm}
\begin{proof}
By \cref{Wuisom}, it suffices to prove that $w(M)\alpha/(1+\alpha) = 0$ for any even-dimensional manifold $M$ and
$\alpha\in H^1(M;\Z/2)$. In \cref{charclass}, we saw $\ang{w(M)\alpha/(1+\alpha), [M]}$ is the mod 2 Euler
characteristic of a submanifold $N$ representing the Poincaré dual of $\alpha$. Since $N$ is a closed,
odd-dimensional manifold, its mod 2 Euler characteristic vanishes, so $w(M)\alpha/(1+\alpha) = 0$.
\end{proof}
\cite[Thm.~5.3]{FreedmanHastings} proved this for state spaces, and the proof idea is the same.

\begin{thm} 
\label{notDW}
For odd $n\ge 4$, $\DS$ is not isomorphic to any $\Z/2$-Dijkgraaf-Witten theory.
\end{thm}
\begin{proof}
By \cref{all_quantum_TDW}, it suffices to prove that $\DS$ is not isomorphic to $\DW$ and $Z_{\alpha^n}$.

If $n= 4k+1$ for some $k\ge 1$, then $\DS(\CP^{2k}) = 0$ by Corollary~\ref{simpconn}, but $\DW(\CP^{2k}) \cong \C$,
and $Z_{\alpha^n}(\CP^{2k})\cong\C$ by \cref{TDWtrivsect}.

If $n = 4k+3$ for some $k\ge 1$, then $\DS(\CP^{2k}\times\RP^2) \cong \C$ by Proposition~\ref{cprp} and
$\DW(\CP^{2k}\times\RP^2) \cong\C^2$. For the theory with Lagrangian $\alpha^n$, \cref{TDWtrivsect} gives us one
copy of $\C$ from the trivial bundle. If $P\to\CP^{2k}\times\RP^2$ denotes the nontrivial bundle and $z\in
H^1(\RP^2;\Z/2)$ denotes the generator, then $\alpha(P) = z$. For any $\phi\in\Aut(P)$,
\begin{align}
	\ang{\alpha(P_\phi)^n, [S^1\times\CP^{2k}\times\RP^2]} &= \ang{(z + x[\phi])^n,
	[S^1\times\CP^{2k}\times\RP^2]}.
	\intertext{Since $(x[\phi])^2 = 0$, this is}
	&= \ang{z^n + nz^{n-1}x[\phi], [S^1\times\CP^{2k}\times\RP^2]},
\end{align}
and since $z^3 = 0$, this is $0$. Thus the state space picks up another factor of $\C$, and
$Z_{\alpha^n}(\CP^{2k}\times\RP^2)\cong\C^2$.
\end{proof}
This was also proven in~\cite[Thm.~8.1]{FreedmanHastings}, with the same manifolds as counterexamples.

\newcommand{\etalchar}[1]{$^{#1}$}

\end{document}